\theoremstyle{plain}
\newtheorem{theorem}{Theorem}
\newtheorem{proposition}[theorem]{Proposition}
\newtheorem*{example}{Example}
\newtheorem{definition}[theorem]{Definition}
\newtheorem*{remark}{Remark}
\newtheorem{corollary}[theorem]{Corollary}
\newtheorem{lemma}[theorem]{Lemma}
\tikzstyle{none}=[inner sep=0mm]
\tikzstyle{every picture}=[baseline=-0.25em]
\tikzstyle{black dot}=[inner sep=0.7mm,minimum width=0pt,minimum height=0pt,fill=black,draw=black,shape=circle]
\tikzstyle{dot}=[black dot]
\tikzstyle{zxnode}=[shape=circle, minimum width=.25cm, inner sep=0.5pt, font=\footnotesize, draw=black]
\tikzstyle{gn}=[zxnode ,fill=green]
\tikzstyle{rn}=[zxnode ,fill=red]
\tikzstyle{H box}=[rectangle,fill=yellow,draw=black,xscale=1,yscale=1,font=\small,inner sep=0.75pt,minimum width=0.15cm,minimum height=0.15cm]\tikzstyle{bx}=[rectangle, fill=white, 
\tikzstyle{udbx}=[rectangle, fill=white, bottom color=green, minimum width=0.4cm, minimum height=0.4cm, top color=red, middle color=white, draw=black, font=\footnotesize ,allow upside down, inner sep=0.15em]
\tikzstyle{rbx}=[rectangle, fill=white, right color=red, minimum width=0.4cm, minimum height=0.4cm, left color=green, middle color=white, draw=black, font=\footnotesize ,allow upside down, inner sep=0.15em]
\tikzstyle{lbx}=[rectangle, fill=white, left color=red, minimum width=0.4cm, minimum height=0.4cm, right color=green, middle color=white, draw=black, font=\footnotesize ,allow upside down, inner sep=0.15em]
\tikzstyle{diredge}=[->]
\tikzstyle{every loop}=[]
\newcommand{\interp}[1] {\left\llbracket #1 \right\rrbracket}
\newcommand{\fit}[1] {\resizebox{\columnwidth}{!}{#1}}
\newcommand{\redstate}{\begin{pmatrix}1 \\ 0\end{pmatrix}}
\newcommand{\tredstate}{\begin{pmatrix}1 & 0\end{pmatrix}}
\newcommand{\frag}[1]{$\frac{\pi}{#1}$-fragment}
\newcommand{\norm}[1]{\lVert #1 \rVert}
\renewcommand{\implies}{\quad\Rightarrow\quad}
\renewcommand{\mod}{\bmod}
\renewcommand{\Re}{\operatorname{Re}}
\renewcommand{\Im}{\operatorname{Im}}
\newcommand{\annoted}[3]{{\scriptstyle #1}\left\lbrace\mathrlap{\phantom{#3}}\right.\overbrace{#3}^{#2}}
\newcommand{\iY}{\begin{pmatrix}0 & 1 \\ -1 & 0\end{pmatrix}}
\newcommand{\nmswapI}[2]{
\begin{tikzpicture}
	\begin{pgfonlayer}{nodelayer}
		\node [style=none] (0) at (-0.7500001, -0) {};
		\node [style=none] (1) at (0.2499998, -0) {};
		\node [style=none] (2) at (-0.7500001, -0.7500001) {};
		\node [style=none] (3) at (0.2499998, -0.7500001) {};
		\node [style=none] (4) at (0.7500001, -0) {};
		\node [style=none] (5) at (-0.2499998, -0.7500001) {};
		\node [style=none] (6) at (-0.2499998, -0) {};
		\node [style=none] (7) at (0.7500001, -0.7500001) {};
		\node [style=none] (8) at (-0.5000002, -0) {$\cdots$};
		\node [style=none] (9) at (0.5000002, -0) {$\cdots$};
		\node [style=none] (10) at (-0.5000002, -0.7500001) {$\cdots$};
		\node [style=none] (11) at (0.5000002, -0.7500001) {$\cdots$};
		\node [style=none] (12) at (-0.875, 0.2500001) {};
		\node [style=none] (13) at (-0.125, 0.2500001) {};
		\node [style=none] (14) at (-0.5000002, 0.5) {};
		\node [style=none] (15) at (-0.5000002, 0.7500001) {$#1$};
		\node [style=none] (16) at (0.125, 0.2500001) {};
		\node [style=none] (17) at (0.875, 0.2500001) {};
		\node [style=none] (18) at (0.5000002, 0.7500001) {$#2$};
		\node [style=none] (19) at (0.5000002, 0.5) {};
		\node [style=none] (20) at (0.9999999, -0) {};
		\node [style=none] (21) at (0.9999999, -0.7500001) {};
		\node [style=none] (22) at (1.25, -0.2500001) {};
	\end{pgfonlayer}
	\begin{pgfonlayer}{edgelayer}
		\draw [in=90, out=-90, looseness=0.75] (0.center) to (3.center);
		\draw [in=90, out=-90, looseness=0.75] (1.center) to (2.center);
		\draw [in=90, out=-90, looseness=0.75] (6.center) to (7.center);
		\draw [in=90, out=-90, looseness=0.75] (4.center) to (5.center);
		\draw [in=-90, out=90, looseness=1.50] (12.center) to (14.center);
		\draw [in=90, out=-90, looseness=1.50] (14.center) to (13.center);
		\draw [in=-90, out=90, looseness=1.50] (16.center) to (19.center);
		\draw [in=90, out=-90, looseness=1.50] (19.center) to (17.center);
		\draw (20.center) to (21.center);
	\end{pgfonlayer}
\end{tikzpicture}}
\newcommand{\hlabel}{\phantomsection\label}
\newcommand{\callrule}[2]{\hyperlink{r:#1}{\textnormal{(#2)}}\xspace}
\newcommand{\so}{\callrule{rules}{S1}}
\newcommand{\st}{\callrule{rules}{S2}}
\newcommand{\sth}{\callrule{rules}{S3}}
\newcommand{\iv}{\callrule{rules}{IV}}
\newcommand{\bo}{\callrule{rules}{B1}}
\newcommand{\bt}{\callrule{rules}{B2}}
\newcommand{\rso}{\callrule{rules}{RS1}}
\newcommand{\rsoo}{\ref{eq:real-spider}\xspace}
\newcommand{\rsth}{\callrule{rules}{RS2}}
\newcommand{\rh}{\callrule{rules}{RH}}
\newcommand{\rzo}{\callrule{rules}{RZO}}
\newcommand{\rsup}{\callrule{rules}{RSUP}}
\newcommand{\rsupc}[1]{\callrule{rules}{RSUP}}
\newcommand{\zxrso}{\callrule{zxr-rules}{S1}}
\newcommand{\zxrst}{\callrule{zxr-rules}{S2}}
\newcommand{\zxrsth}{\callrule{zxr-rules}{S3}}
\newcommand{\zxriv}{\callrule{zxr-rules}{IV}}
\newcommand{\zxrbo}{\callrule{zxr-rules}{B1}}
\newcommand{\zxrbt}{\callrule{zxr-rules}{B2}}
\newcommand{\zxrhl}{\callrule{zxr-rules}{HL}}
\newcommand{\zxrh}{\callrule{zxr-rules}{H}}
\newcommand{\zxrzo}{\callrule{zxr-rules}{ZO}}
\newcommand{\zxso}{\callrule{zx-rules}{S1}}
\newcommand{\zxst}{\callrule{zx-rules}{S2}}
\newcommand{\zxsth}{\callrule{zx-rules}{S3}}
\newcommand{\zxe}{\callrule{zx-rules}{E}}
\newcommand{\zxbo}{\callrule{zx-rules}{B1}}
\newcommand{\zxbt}{\callrule{zx-rules}{B2}}
\newcommand{\zxeu}{\callrule{zx-rules}{EU}}
\newcommand{\zxh}{\callrule{zx-rules}{H}}
\newcommand{\zxkt}{\callrule{zx-rules}{K2}}
\newcommand{\zxsup}{\callrule{zx-rules}{SUP}}
\newcommand{\zxsupc}[1]{\callrule{zx-rules}{SUP}}
\newcommand{\zxzo}{\callrule{zx2}{ZO}}
\newcommand{\zxiv}{\callrule{zx2}{IV}}
\def \reca {Y-Calculus\xspace}
\def \reda {Y-diagram\xspace}
\def \redas {Y-diagrams\xspace}
\def \rec {Y\xspace}
\def \repiot {\textnormal{Y$_{\frac{\pi}{2}}$}\xspace}
\newcommand{\reind}[1]{\textnormal{Y$_{#1}$}\xspace}
\title{\reca: A Language for Real Matrices Derived from the ZX-Calculus}
\author{
Emmanuel Jeandel $\qquad\qquad$ Simon Perdrix $\qquad\qquad$ Renaud Vilmart
\institute{LORIA, CNRS, Universit\'e de Lorraine, Inria, F 54000 Nancy, France}
\email{emmanuel.jeandel@loria.fr \qquad\qquad simon.perdrix@loria.fr \qquad\qquad renaud.vilmart@loria.fr}
}
\begin{document}
\maketitle
\begin{abstract}
We introduce a ZX-like diagrammatic language devoted to manipulating real matrices -- and rebits --, with its own set of axioms. We prove the necessity of some non trivial axioms of these. We show that some restriction of the language is complete. %We define a generalisation of Hadamard that behaves well in our case. 
We exhibit two interpretations to and from the ZX-Calculus, thus showing the consistency between the two languages. Finally, we derive from our work a way to extract the real or imaginary part of a ZX-diagram, and prove that a restriction of our language is complete if the equivalent restriction of the ZX-calculus is complete.
\end{abstract}

\section{Introduction}

The ZX-Calculus, introduced by Coecke and Duncan \cite{interacting}, is a powerful formal tensor-language for quantum reasoning \cite{coecke2017picturing}. 
The ZX-calculus is based upon the axiomatisation of interacting observables (Pauli-X and Pauli-Z) together with rotations around X- and Z-axis. Both X- and Z-observables are real but X- and Z-rotations are not.  This is a universal language for quantum mechanics: any complex $2^n\times 2^m$-matrix can be represented. The ZX-calculus can be used to represent quantum circuits as well as measurement-based quantum computations \cite{mbqc,CP-state-transfer,CP12}. The angle-free version of the ZX-calculus has been proved to be universal for real stabilizer quantum mechanics \cite{pivoting}, a non universal fragment of quantum mechanics.

In this paper we introduce a ZX-like language for real matrices, called Y-calculus. The introduction of the Y-calculus has multiple motivations: 
\begin{itemize}
\item[($i$)] First, diagrammatic languages, like the ZX-calculus, are not necessarily devoted to quantum applications \cite{Bonchi2014,CSC10}, and dealing with real matrices might be more convenient than complex matrices. 
\item[($ii$)] Moreover, real quantum mechanics is a sub-quantum theory of interests, from the very foundational questions  to quantum information processing: the use of real rather than complex numbers in quantum mechanics is related to local tomography \cite{locality}; As a model of computation using real instead of complex numbers does not change its computational power \cite{BernsteinVazirani,MP12}. Moreover, real quantum computation is used e.g.~for interactive proofs \cite{complexity,perdrix:hal-01377339} or to study contextuality \cite{context-rebits}. 
\item[($iii$)] The axiomatisation of two interacting observables is the cornerstone of the ZX-calculus. These two observables correspond to the so-called two unbiased basis of rebits (real qubits). The ZX-calculus fails to capture in a simple way the third unbiased basis \cite{trichromatic} which occurs only in the complex case. As a consequence, the angle-free ZX-calculus seems to be better suited for real quantum mechanics than complex quantum mechanics. We explore this line of research in the present paper by equipping  the angle-free ZX-calculus with real rotations. 
\end{itemize}

The Y-calculus is based on the same complementary observables (Pauli-X and Pauli-Z) as the ZX-calculus. To make it universal for real quantum mechanics we axiomatise the Y-rotations which are real rotations. Notice that Y-rotations have been axiomatised by Lang and Coecke  \cite{trichromatic}, however they use non-real matrices to represent Y-rotations, and they axiomatise Y-rotations together with the X- and Z-rotations, the interactions of the three rotations leading to a combinatorial explosion of the rules of the language which is avoided in the Y-calculus which only deals with Y-rotations.

One of the main open question for tensor-like languages like the ZX-calculus is the completeness of the language. The language would be complete if, for any two diagrams that represent the same matrix, they could be transformed into one-another only using the transformation rules allowed by the language. The ZX-Calculus is not complete in general \cite{incompleteness}, but some of its fragments are. The $\pi$-fragment and the \frag{2} are both complete \cite{pivoting,pi_2-complete}. The \frag{4}, unlike the $\pi$- and the \frag{2}, is \emph{approximately universal} \cite{clifford+t}, meaning that any quantum evolution can be approximated with arbitrarily good precision with this fragment. Notice that a complete axiomatisation for the  \frag{4} has been recently introduced \cite{complete}.

In section \ref{sec:calculus}, we present the ZX-Calculus and define the \reca. We give a set of rules to this language, and prove that two of its non-trivial axioms are not derivable from the others (section \ref{sec:minimality}). We establish a link between the \frag{2} of the \reca and the $\pi$-fragment of the ZX-Calculus, and thanks to the completeness of the latter, we prove the \frag{2} of the \reca is complete (section \ref{sec:pi_2-completeness}). 
We finally exhibit an interpretation from the \reca to the ZX-Calculus (section \ref{sec:y-to-zx}), which shows the consistency of the two languages, and another interpretation from the ZX-Calculus to the \reca %(section \ref{sec:zx-to-y}),
which show they have the same power: ZX-calculus is complete if and only if Y-calculus is complete.

\section{ZX and \rec-Calculi}
\label{sec:calculus}

\subsection{ZX-Calculus}
\label{sec:zx-def}

A ZX-diagram $D:k\to l$ is an open diagram with $k$ inputs and $l$ outputs and is generated by:
\begin{center}
\bgroup
\def\arraystretch{2.5}
\begin{tabular}{|cc|cc|}
\hline
$R_Z^{(n,m)}(\alpha):n\to m$ & \begin{tikzpicture}
	\begin{pgfonlayer}{nodelayer}
		\node [style=gn] (0) at (0, -0) {$~\alpha~$};
		\node [style=none] (1) at (-0.5000001, 0.7499999) {};
		\node [style=none] (2) at (0, 0.4999999) {$\cdots$};
		\node [style=none] (3) at (0.5000001, 0.7499999) {};
		\node [style=none] (4) at (-0.5000001, -0.7499999) {};
		\node [style=none] (5) at (0, -0.4999999) {$\cdots$};
		\node [style=none] (6) at (0.5000001, -0.7499999) {};
		\node [style=none] (7) at (0, 0.75) {$n$};
		\node [style=none] (8) at (0, -0.7499998) {$m$};
	\end{pgfonlayer}
	\begin{pgfonlayer}{edgelayer}
		\draw [style=none, bend right, looseness=1.00] (1.center) to (0);
		\draw [style=none, bend right, looseness=1.00] (0) to (3.center);
		\draw [style=none, bend right, looseness=1.00] (0) to (4.center);
		\draw [style=none, bend left, looseness=1.00] (0) to (6.center);
	\end{pgfonlayer}
\end{tikzpicture} & $R_X^{(n,m)}(\alpha):n\to m$ & \begin{tikzpicture}
	\begin{pgfonlayer}{nodelayer}
		\node [style=rn] (0) at (0, -0) {$~\alpha~$};
		\node [style=none] (1) at (-0.5000001, 0.7499999) {};
		\node [style=none] (2) at (0, 0.4999999) {$\cdots$};
		\node [style=none] (3) at (0.5000001, 0.7499999) {};
		\node [style=none] (4) at (-0.5000001, -0.7499999) {};
		\node [style=none] (5) at (0, -0.4999999) {$\cdots$};
		\node [style=none] (6) at (0.5000001, -0.7499999) {};
		\node [style=none] (7) at (0, 0.75) {$n$};
		\node [style=none] (8) at (0, -0.7499999) {$m$};
		\node [style=none] (9) at (0, -0.9999999) {};
		\node [style=none] (10) at (0, 0.9999999) {};
	\end{pgfonlayer}
	\begin{pgfonlayer}{edgelayer}
		\draw [style=none, bend right, looseness=1.00] (1.center) to (0);
		\draw [style=none, bend right, looseness=1.00] (0) to (3.center);
		\draw [style=none, bend right, looseness=1.00] (0) to (4.center);
		\draw [style=none, bend left, looseness=1.00] (0) to (6.center);
	\end{pgfonlayer}
\end{tikzpicture}\\[4ex]\hline
$H:1\to 1$ & \begin{tikzpicture}
	\begin{pgfonlayer}{nodelayer}
		\node [style={H box}] (0) at (0, 0) {};
		\node [style=none] (1) at (0, 0.5) {};
		\node [style=none] (2) at (0, -0.5) {};
	\end{pgfonlayer}
	\begin{pgfonlayer}{edgelayer}
		\draw (2.center) to (1.center);
	\end{pgfonlayer}
\end{tikzpicture}
 & $e:0\to 0$ & \begin{tikzpicture}
	\begin{pgfonlayer}{nodelayer}
		\node [style=none] (0) at (-0.2499999, 0.2499999) {};
		\node [style=none] (1) at (-0.2499999, -0.2499999) {};
		\node [style=none] (2) at (0.2499999, 0.2499999) {};
		\node [style=none] (3) at (0.2499999, -0.2499999) {};
	\end{pgfonlayer}
	\begin{pgfonlayer}{edgelayer}
		\draw [style=dashed] (0.center) to (2.center);
		\draw [style=dashed] (2.center) to (3.center);
		\draw [style=dashed] (3.center) to (1.center);
		\draw [style=dashed] (0.center) to (1.center);
	\end{pgfonlayer}
\end{tikzpicture}\\\hline
$\mathbb{I}:1\to 1$ & \begin{tikzpicture}
	\begin{pgfonlayer}{nodelayer}
		\node [style=none] (0) at (0, 0.2499999) {};
		\node [style=none] (1) at (0, -0.2499999) {};
	\end{pgfonlayer}
	\begin{pgfonlayer}{edgelayer}
		\draw (0.center) to (1.center);
	\end{pgfonlayer}
\end{tikzpicture} & $\sigma:2\to 2$ & \begin{tikzpicture}
	\begin{pgfonlayer}{nodelayer}
		\node [style=none] (0) at (-0.2499999, 0.2499999) {};
		\node [style=none] (1) at (0.2499999, -0.2499999) {};
		\node [style=none] (2) at (0.2499999, 0.2499999) {};
		\node [style=none] (3) at (-0.2499999, -0.2499999) {};
	\end{pgfonlayer}
	\begin{pgfonlayer}{edgelayer}
		\draw [in=90, out=-90, looseness=1.00] (0.center) to (1.center);
		\draw [in=90, out=-90, looseness=1.00] (2.center) to (3.center);
	\end{pgfonlayer}
\end{tikzpicture}\\\hline
$\epsilon:2\to 0$ & \begin{tikzpicture}
	\begin{pgfonlayer}{nodelayer}
		\node [style=none] (0) at (-0.2499999, 0.2499999) {};
		\node [style=none] (1) at (0.2499999, 0.2499999) {};
	\end{pgfonlayer}
	\begin{pgfonlayer}{edgelayer}
		\draw [in=-90, out=-90, looseness=2.00] (0.center) to (1.center);
	\end{pgfonlayer}
\end{tikzpicture} & $\eta:0\to 2$ & \raisebox{0.6em}{\begin{tikzpicture}
	\begin{pgfonlayer}{nodelayer}
		\node [style=none] (0) at (0.2499999, -0.2499999) {};
		\node [style=none] (1) at (-0.2499999, -0.2499999) {};
	\end{pgfonlayer}
	\begin{pgfonlayer}{edgelayer}
		\draw [in=90, out=90, looseness=2.00] (1.center) to (0.center);
	\end{pgfonlayer}
\end{tikzpicture}}\\\hline
\end{tabular}
\egroup\\
where $n,m\in \mathbb{N}$ and $\alpha \in \mathbb{R}$
\end{center}
and the two compositions:
\begin{itemize}
\item Spatial Composition: for any $D_1:a\to b$ and $D_2:c\to d$, $D_1\otimes D_2:a+c\to b+d$ consists in placing $D_1$ and $D_2$ side by side, $D_2$ on the right of $D_1$.
\item Sequential Composition: for any $D_1:a\to b$ and $D_2:b\to c$, $D_2\circ D_1:a\to c$ consists in placing $D_1$ on the top of $D_2$, connecting the outputs of $D_1$ to the inputs of $D_2$.
\end{itemize}

\begin{figure}[!htb]
 \centering
 \hypertarget{r:zx-rules}{}
  \begin{tabular}{|ccccc|}
   \hline
   &&&& \\
   \begin{tikzpicture}[font={\footnotesize}]
	\begin{pgfonlayer}{nodelayer}
		\node [style=none] (0) at (-1.25, -0) {\rotatebox[origin=c]{63.43}{$~\cdots~$}};
		\node [style=none] (1) at (0, -0) {$=$};
		\node [style=gn] (2) at (0.9999999, -0) { \footnotesize$\alpha{+}\beta$};
		\node [style=gn, minimum width={0.5 cm}] (3) at (-0.7500001, -0.2500001) {\footnotesize $\beta$};
		\node [style=none] (4) at (-1.75, -0.5) {$~\cdots~$};
		\node [style=none] (5) at (1.5, -0.7499999) {};
		\node [style=none] (6) at (-1, -0.75) {};
		\node [style=none] (7) at (0.9999999, -0.7499999) {$~\cdots~$};
		\node [style=none] (8) at (-0.5, -0.75) {};
		\node [style=none] (9) at (0.5000002, -0.7499999) {};
		\node [style=none] (10) at (-2, -0.5) {};
		\node [style=none] (11) at (-1.5, -0.5) {};
		\node [style=none] (12) at (-0.75, -0.75) {$~\cdots~$};
		\node [style=none] (13) at (0.9999999, 0.7499999) {$~\cdots~$};
		\node [style=none] (14) at (0.5000002, 0.7499999) {};
		\node [style=none] (15) at (-2, 0.75) {};
		\node [style=none] (16) at (-0.5, 0.5) {};
		\node [style=none] (17) at (-1.5, 0.75) {};
		\node [style=none] (18) at (1.5, 0.7499999) {};
		\node [style=gn, minimum width={0.5 cm}] (19) at (-1.75, 0.25) {\footnotesize$\alpha$};
		\node [style=none] (20) at (-0.75, 0.5) {$~\cdots~$};
		\node [style=none] (21) at (-1, 0.5) {};
		\node [style=none] (22) at (-1.75, 0.75) {$~\cdots~$};
	\end{pgfonlayer}
	\begin{pgfonlayer}{edgelayer}
		\draw (3) to (16.center);
		\draw (3) to (6.center);
		\draw (3) to (8.center);
		\draw (19) to (10.center);
		\draw (19) to (11.center);
		\draw [bend right, looseness=1.00] (19) to (3);
		\draw [bend left, looseness=1.00] (19) to (3);
		\draw (14.center) to (2);
		\draw (2) to (9.center);
		\draw (5.center) to (2);
		\draw (2) to (18.center);
		\draw (19) to (15.center);
		\draw (19) to (17.center);
		\draw (3) to (21.center);
	\end{pgfonlayer}
\end{tikzpicture}&(S1) &$\qquad$& \begin{tikzpicture}
	\begin{pgfonlayer}{nodelayer}
		\node [style=gn] (0) at (-0.7499998, -0) {};
		\node [style=none] (1) at (0, -0) {=};
		\node [style=none] (2) at (-0.7499998, 0.5) {};
		\node [style=none] (3) at (-0.7499998, -0.5000001) {};
		\node [style=none] (4) at (0.7499998, 0.5) {};
		\node [style=none] (5) at (0.7499998, -0.5000001) {};
		\node [style=none] (6) at (0, -0.7499998) {};
		\node [style=none] (7) at (0, 0.75) {};
	\end{pgfonlayer}
	\begin{pgfonlayer}{edgelayer}
		\draw (2.center) to (0);
		\draw (0) to (3.center);
		\draw (4.center) to (5.center);
	\end{pgfonlayer}
\end{tikzpicture}&(S2)\\
   &&&& \\
   \begin{tikzpicture}
	\begin{pgfonlayer}{nodelayer}
		\node [style=none] (0) at (0.7499998, -0.2500001) {};
		\node [style=none] (1) at (0, -0) {$=$};
		\node [style=gn] (2) at (1.25, 0.2500001) {};
		\node [style=none] (3) at (-0.7499998, -0.2500001) {};
		\node [style=none] (4) at (1.75, -0.2500001) {};
		\node [style=none] (5) at (-1.75, -0.2500001) {};
		\node [style=none] (6) at (0, 0.5) {};
		\node [style=none] (7) at (0, -0.5000001) {};
	\end{pgfonlayer}
	\begin{pgfonlayer}{edgelayer}
		\draw [in=90, out=90, looseness=1.75] (5.center) to (3.center);
		\draw [in=90, out=90, looseness=1.75] (0.center) to (4.center);
	\end{pgfonlayer}
\end{tikzpicture}&(S3) && \begin{tikzpicture}
	\begin{pgfonlayer}{nodelayer}
		\node [style=rn] (0) at (-0.7500001, -0.375) {$\frac{-\pi}{4}$};
		\node [style=gn] (1) at (-0.7499998, 0.375) {$~\frac{\pi}{4}~$};
		\node [style=none] (2) at (0, -0) {=};
		\node [style=none] (3) at (0.7500001, 0.25) {};
		\node [style=none] (4) at (0.7500001, -0.25) {};
		\node [style=none] (5) at (1.25, 0.25) {};
		\node [style=none] (6) at (1.25, -0.25) {};
	\end{pgfonlayer}
	\begin{pgfonlayer}{edgelayer}
		\draw (0) to (1);
		\draw [style=dashed] (3.center) to (5.center);
		\draw [style=dashed] (5.center) to (6.center);
		\draw [style=dashed] (6.center) to (4.center);
		\draw [style=dashed] (4.center) to (3.center);
	\end{pgfonlayer}
\end{tikzpicture}&(E)\\
   &&&& \\
   \begin{tikzpicture}
	\begin{pgfonlayer}{nodelayer}
		\node [style=gn] (0) at (0.75, 0) {};
		\node [style=none] (1) at (2.25, -0.25) {};
		\node [style=none] (2) at (0.5, -0.5) {};
		\node [style=rn] (3) at (2.25, 0.25) {};
		\node [style=none] (4) at (1, -0.5) {};
		\node [style=rn] (5) at (0.75, 0.5) {};
		\node [style=rn] (6) at (2.75, 0.25) {};
		\node [style=none] (7) at (2.75, -0.25) {};
		\node [style=none] (8) at (1.5, 0) {$=$};
		\node [style=rn] (9) at (0, 0.25) {};
		\node [style=gn] (10) at (0, -0.25) {};
	\end{pgfonlayer}
	\begin{pgfonlayer}{edgelayer}
		\draw [style=none] (5) to (0);
		\draw[bend right=23]  [style=none] (0) to (2.center);
		\draw[bend left=23]  [style=none] (0) to (4.center);
		\draw [style=none] (3) to (1.center);
		\draw [style=none] (6) to (7.center);
		\draw (9) to (10);
	\end{pgfonlayer}
\end{tikzpicture}&(B1) && \begin{tikzpicture}
	\begin{pgfonlayer}{nodelayer}
		\node [style=none] (0) at (1.5, 0.75) {};
		\node [style=rn] (1) at (-1.75, -0.2500001) {};
		\node [style=none] (2) at (1, -0.7499998) {};
		\node [style=none] (3) at (-1, 0.9999999) {};
		\node [style=none] (4) at (1, 0.75) {};
		\node [style=none] (5) at (-1.75, -0.7499998) {};
		\node [style=none] (6) at (-1.75, 0.9999999) {};
		\node [style=gn] (7) at (-1, 0.5) {};
		\node [style=none] (8) at (0, -0) {$=$};
		\node [style=rn] (9) at (-1, -0.2500001) {};
		\node [style=gn] (10) at (1.25, -0.2500001) {};
		\node [style=gn] (11) at (-1.75, 0.5) {};
		\node [style=none] (12) at (-1, -0.7499998) {};
		\node [style=none] (13) at (1.5, -0.7499998) {};
		\node [style=rn] (14) at (1.25, 0.2500001) {};
		\node [style=rn] (15) at (-2.25, 0.2500001) {};
		\node [style=gn] (16) at (-2.25, -0.2500001) {};
		\node [style=none] (17) at (0, -0.9999999) {};
		\node [style=none] (18) at (0, 1.25) {};
	\end{pgfonlayer}
	\begin{pgfonlayer}{edgelayer}
		\draw [style=none] (12.center) to (9);
		\draw [style=none] (5.center) to (1);
		\draw [style=none] (7) to (3.center);
		\draw [style=none, bend right=23, looseness=1.00] (9) to (7);
		\draw [style=none] (11) to (6.center);
		\draw [style=none, bend left=23, looseness=1.00] (1) to (11);
		\draw [style=none, bend right=23, looseness=1.00] (13.center) to (10);
		\draw [style=none] (10) to (14);
		\draw [style=none, bend left=23, looseness=1.00] (14) to (4.center);
		\draw [style=none, bend right=23, looseness=1.00] (14) to (0.center);
		\draw [bend right=23, looseness=1.00] (10) to (2.center);
		\draw (11) to (9);
		\draw (7) to (1);
		\draw (15) to (16);
	\end{pgfonlayer}
\end{tikzpicture}&(B2)\\
   &&&& \\
   \begin{tikzpicture}
	\begin{pgfonlayer}{nodelayer}
		\node [style=gn] (0) at (0.5, -0.55) {$~\frac{\pi}{2}~$};
		\node [style=rn] (1) at (0.5, -0) {};
		\node [style=gn] (2) at (0.5, 0.55) {$~\frac{\pi}{2}~$};
		\node [style=none] (3) at (0.5, 0.9999999) {};
		\node [style=gn] (4) at (1.25, 0.5) {$\frac{-\pi}{2}$};
		\node [style=none] (5) at (0.5, -0.9999999) {};
		\node [style=none] (6) at (-0.9999999, 0.9999999) {};
		\node [style=none] (7) at (-0.9999999, -0.9999999) {};
		\node [style=none] (8) at (-0.2500001, -0) {$=$};
		\node [style={{H box}}] (9) at (-0.9999999, -0) {};
	\end{pgfonlayer}
	\begin{pgfonlayer}{edgelayer}
		\draw (3.center) to (2);
		\draw (2) to (1);
		\draw (1) to (0);
		\draw (0) to (5.center);
		\draw (1) to (4);
		\draw (6.center) to (7.center);
	\end{pgfonlayer}
\end{tikzpicture}&(EU) &&  \begin{tikzpicture}
	\begin{pgfonlayer}{nodelayer}
		\node [style=none] (0) at (0.5000002, 0.7499999) {};
		\node [style=none] (1) at (-0.7500001, -1) {};
		\node [style={H box}] (2) at (-0.7500001, 0.5) {};
		\node [style=none] (3) at (-1.25, -0.7499999) {\raisebox{2mm}{...}};
		\node [style=none] (4) at (0.9999999, -0.7499999) {\raisebox{2mm}{...}};
		\node [style=none] (5) at (0.9999999, 0.7499999) {\raisebox{-2mm}{...}};
		\node [style=rn] (6) at (-1.25, -0) {\footnotesize$~\alpha~$};
		\node [style={H box}] (7) at (-1.75, -0.5000001) {};
		\node [style=none] (8) at (-0.7500001, 1) {};
		\node [style={H box}] (9) at (-0.7500001, -0.5000001) {};
		\node [style=none] (10) at (-1.25, 0.7499999) {\raisebox{-2mm}{...}};
		\node [style=none] (11) at (0, -0) {$=$};
		\node [style=gn] (12) at (0.9999999, -0) {\footnotesize$~\alpha~$};
		\node [style=none] (13) at (1.5, -0.7499999) {};
		\node [style=none] (14) at (1.5, 0.7499999) {};
		\node [style=none] (15) at (-1.75, 1) {};
		\node [style={H box}] (16) at (-1.75, 0.5) {};
		\node [style=none] (17) at (-1.75, -1) {};
		\node [style=none] (18) at (0.5000002, -0.7499999) {};
	\end{pgfonlayer}
	\begin{pgfonlayer}{edgelayer}
		\draw [bend right, looseness=1.00] (6) to (7);
		\draw [bend left, looseness=1.00] (6) to (9);
		\draw (9) to (1.center);
		\draw (7) to (17.center);
		\draw [bend left, looseness=1.00] (6) to (16);
		\draw [bend right, looseness=1.00] (6) to (2);
		\draw (2) to (8.center);
		\draw (16) to (15.center);
		\draw [bend left=23, looseness=1.00] (12) to (0.center);
		\draw [bend right=23, looseness=1.00] (12) to (14.center);
		\draw [bend right=23, looseness=1.00] (12) to (18.center);
		\draw [bend left=23, looseness=1.00] (12) to (13.center);
	\end{pgfonlayer}
\end{tikzpicture}&(H)\\
   &&&& \\
   \begin{tikzpicture}
	\begin{pgfonlayer}{nodelayer}
		\node [style=none] (0) at (-0.25, -0) {$=$};
		\node [style=none] (1) at (-1, 1) {};
		\node [style=none] (2) at (-1, -0.75) {};
		\node [style=none] (3) at (1.25, 1) {};
		\node [style=gn] (4) at (-1, -0.25) {$~\pi~$};
		\node [style=none] (5) at (1.25, -0.7499999) {};
		\node [style=rn] (6) at (-1, 0.5) {$~\alpha~$};
		\node [style=rn] (7) at (1.25, -0.25) {$-\alpha$};
		\node [style=gn] (8) at (1.25, 0.5) {$~\pi~$};
		\node [style=rn] (9) at (-1.75, 0.25) {};
		\node [style=gn] (10) at (-1.75, -0.25) {};
		\node [style=rn] (11) at (0.5000002, 0.5) {$~\alpha~$};
		\node [style=gn] (12) at (0.5000002, -0.25) {$~\pi~$};
	\end{pgfonlayer}
	\begin{pgfonlayer}{edgelayer}
		\draw (3.center) to (8);
		\draw (8) to (7);
		\draw (7) to (5.center);
		\draw (4) to (2.center);
		\draw (1.center) to (6);
		\draw (6) to (4);
		\draw (9) to (10);
		\draw (12) to (11);
	\end{pgfonlayer}
\end{tikzpicture}&(K2) && \begin{tikzpicture}
	\begin{pgfonlayer}{nodelayer}
		\node [style=gn] (0) at (-1.25, 0.7499999) {$~~\alpha~~$};
		\node [style=gn] (1) at (-0.2499998, 0.7499999) {$\alpha{+}\pi$};
		\node [style=gn] (2) at (1.25, 0.7499999) {$2\alpha{+}\pi$};
		\node [style=none] (3) at (1.25, -0.7500001) {};
		\node [style=rn] (4) at (-0.7500001, -0.2500001) {};
		\node [style=none] (5) at (-0.7499998, -0.7499998) {};
		\node [style=none] (6) at (0.2499998, -0) {$=$};
		\node [style=rn] (7) at (1.25, -0.2500001) {};
	\end{pgfonlayer}
	\begin{pgfonlayer}{edgelayer}
		\draw (0) to (4);
		\draw (1) to (4);
		\draw (4) to (5.center);
		\draw (7) to (3.center);
		\draw [bend right=45, looseness=1.00] (2) to (7);
		\draw [bend right=45, looseness=1.00] (7) to (2);
	\end{pgfonlayer}
\end{tikzpicture}&(SUP) \\
   &&&& \\
   \hline
  \end{tabular}
 \caption[]{Set of rules for the ZX-calculus \cite{supplementarity} with scalars. All of these rules also hold when flipped upside-down, or with the colours red and green swapped. The right-hand side of (IV) is an empty diagram. ($\cdots$) denote zero or more wires, while (\protect\rotatebox{45}{\raisebox{-0.4em}{$\cdots$}}) denote one or more wires.}
 \label{fig:ZX_rules}
\end{figure}
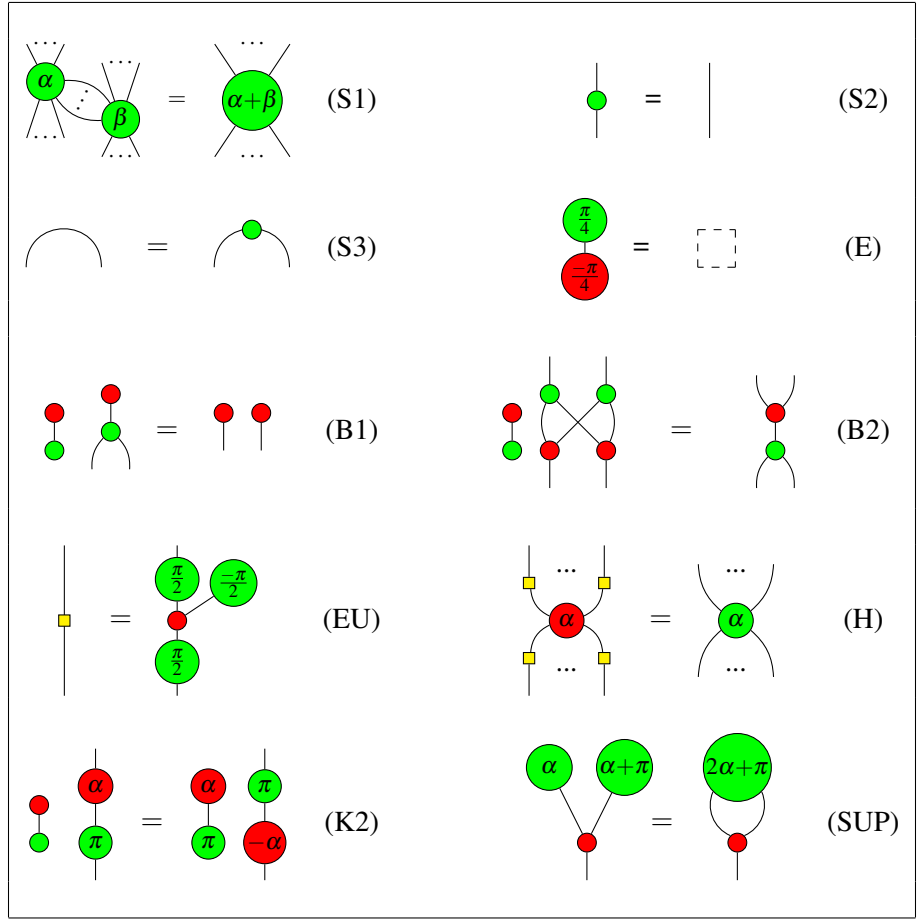

The standard interpretation of the ZX-diagrams associates with any diagram $D:n\to m$ a linear map $\interp{D}:\mathbb{C}^{2^n}\to\mathbb{C}^{2^m}$ inductively defined as follows:
$$ \interp{D_1\otimes D_2}:=\interp{D_1}\otimes\interp{D_2} \qquad 
\interp{D_2\circ D_1}:=\interp{D_2}\circ\interp{D_1} \qquad
\interp{% [inline block 0: 47 envs, 37381 chars -> data_tex | \begin{tikzpicture} 	\begin{pgfonlayer}{nodelayer}...]
\]
\end{example}
Therefore, two vertices connected by an horizontal wire have meaning.

\begin{theorem}
All the equalities in Figures \ref{fig:ZX_rules} and \ref{fig:Y_rules} are sound, i.e.~for $L\in\{ZX;\rec\}$ $$(L\vdash D_1=D_2)\implies(\interp{D_1}=\interp{D_2})$$
\end{theorem}

When we can show that a diagram $D_1$ is equal to another one, $D_2$, using a succession of equalities of the set of rules $L\in\{ZX;\rec\}$, we write $L\vdash D_1 = D_2$. Given that the rules are sound, this implies that $\interp{D_1} = \interp{D_2}$. The rules can obviously be applied to any subdiagram, meaning, for any diagram $D$:
\[ (L\vdash D_1=D_2)\implies \left\lbrace\begin{array}{ccc}
(L\vdash D_1\circ D = D_2\circ D) & \land & (L\vdash D\circ D_1 = D\circ D_2)\\
(L\vdash D_1\otimes D = D_2\otimes D) & \land & (L\vdash D\otimes D_1 = D\otimes D_2)
\end{array}\right. \]

\subsection{Discussion on the ``real boxes''}

\textbf{Directedness:}
The real boxes represent real rotations. Unlike complex rotations -- such as the ones induced by the green and red dots --, their corresponding matrices cannot be symmetrical. Indeed, a real symmetrical matrix is diagonalisable, and rotation matrices are orthogonal. However the only real diagonal and orthogonal matrices have diagonal coefficients in $\{-1;1\}$, hence, representing a rotation of angle $\alpha$ with a real symmetrical matrix would be impossible.
~\\
\noindent\textbf{$4\pi$-periodicity:} Textbook definitions of quantum mechanics rotation operators are often $4\pi$-periodical -- see for instance Nielsen and Chuang's \cite{nielsen_chuang_2010}: given an operator $A$ s.t. $A^2=I$ one can define the rotation $R_A(\alpha)= \cos(\frac \alpha 2)I-i\sin(\frac\alpha2) A$ which satisfies $R_A(2\pi)=-1$. The interpretation of this non $2\pi$-periodicity is known as the orientation entanglement \cite{feynman}.
Real rotations of the Y-calculus correspond to the case $A=Y = \left(\begin{array}{cc}0&-i\\i&0\end{array}\right)$.
In the ZX-Calculus, rotations have been made $2\pi$-periodical \cite{interacting} by considering the operator $e^{i\frac \alpha 2}R_A(\alpha)$ instead of $R_A(\alpha)$. However, one cannot do the same with real rotations.

\begin{proposition}[$4\pi$-periodicity]
\label{prop:4_pi-periodical}
The real boxes are $4\pi$-periodical:
\[\begin{tikzpicture}
	\begin{pgfonlayer}{nodelayer}
		\node [style=bx] (0) at (-2.5, -0) {$4\pi$};
		\node [style=none] (1) at (-2.5, 0.7499999) {};
		\node [style=none] (2) at (-2.5, -0.7500001) {};
		\node [style=none] (3) at (-1, 0.7499999) {};
		\node [style=none] (4) at (-1, -0.7500001) {};
		\node [style=none] (5) at (-1.75, -0) {=};
		\node [style=none] (6) at (2.75, -0.7500001) {};
		\node [style=bx] (7) at (1.25, -0) {$2\pi$};
		\node [style=none] (8) at (1.25, -0.7500001) {};
		\node [style=none] (9) at (1.25, 0.7499999) {};
		\node [style=none] (10) at (2, -0) {$\neq$};
		\node [style=none] (11) at (2.75, 0.7499999) {};
		\node [style=none] (12) at (0, -0) {but};
	\end{pgfonlayer}
	\begin{pgfonlayer}{edgelayer}
		\draw (1.center) to (2.center);
		\draw (3.center) to (4.center);
		\draw (9.center) to (8.center);
		\draw (11.center) to (6.center);
	\end{pgfonlayer}
\end{tikzpicture}\]
\end{proposition}

\begin{proof}
In appendix at page \pageref{prf:4_pi-periodical}.
\end{proof}

\section{Minimality}
\label{sec:minimality}

In this section, we prove the necessity of some rules of the \reca i.e.~we show that some of its axioms are not deducible from the others. A rule $(R)$ is necessary when $Y\setminus\{(R)\} \nvdash (R)$.

\begin{proposition}~~\\
\label{prop:rs3-necessary}
\vspace{-2em}
\begin{multicols}{2}
\begin{tikzpicture}
	\begin{pgfonlayer}{nodelayer}
		\node [style=gn] (0) at (-2.25, -0.5000001) {};
		\node [style=gn] (1) at (-2.25, 0.9999999) {};
		\node [style=gn] (2) at (-1.5, 0.2500001) {};
		\node [style=bx] (3) at (-2.25, -0.9999999) {$\alpha$};
		\node [style=none] (4) at (-2.25, 1.5) {};
		\node [style=none] (5) at (-0.7499998, 1.5) {};
		\node [style=none] (6) at (-2.25, -1.5) {};
		\node [style=none] (7) at (0, -0) {=};
		\node [style=udbx] (8) at (2.5, 0.9999999) {$\alpha$};
		\node [style=bx] (9) at (-1.5, -0.2500001) {$\pi/2$};
		\node [style=udbx] (10) at (-2.25, 0.2500001) {$\pi/2$};
		\node [style=bx] (11) at (-1.5, 0.75) {$\pi/2$};
		\node [style=none] (12) at (1, 1.5) {};
		\node [style=bx] (13) at (1.75, -0.2500001) {$\pi/2$};
		\node [style=none] (14) at (2.5, 1.5) {};
		\node [style=gn] (15) at (1, 0.9999999) {};
		\node [style=gn] (16) at (1.75, 0.2500001) {};
		\node [style=none] (17) at (1, -1.5) {};
		\node [style=udbx] (18) at (1, 0.2500001) {$\pi/2$};
		\node [style=gn] (19) at (1, -0.5000001) {};
		\node [style=bx] (20) at (1.75, 0.75) {$\pi/2$};
		\node [style=none] (21) at (1, -1.75) {};
		\node [style=none] (22) at (1, 1.75) {};
	\end{pgfonlayer}
	\begin{pgfonlayer}{edgelayer}
		\draw (4.center) to (1);
		\draw (0) to (3);
		\draw (3) to (6.center);
		\draw [in=-1, out=-90, looseness=1.25] (5.center) to (2);
		\draw (1) to (10);
		\draw (10) to (0);
		\draw [in=-102, out=-45, looseness=1.25] (0) to (9);
		\draw (9) to (2);
		\draw [style=none] (11) to (2);
		\draw [style=none, in=90, out=45, looseness=1.25] (1) to (11);
		\draw (12.center) to (15);
		\draw (15) to (18);
		\draw (18) to (19);
		\draw [in=-102, out=-45, looseness=1.25] (19) to (13);
		\draw (13) to (16);
		\draw [style=none] (20) to (16);
		\draw [style=none, in=90, out=45, looseness=1.25] (15) to (20);
		\draw [style=none] (14.center) to (8);
		\draw [style=none, in=0, out=-90, looseness=1.25] (8) to (16);
		\draw [style=none] (19) to (17.center);
	\end{pgfonlayer}
\end{tikzpicture}$\quad$\textnormal{(RS3)}\\
\vfill \null 
\noindent
cannot be derived from the other rules in any $\frac{\pi}{2n}$-fragment $(n\in\mathbb{N}^*)$.
\vfill \null 
\end{multicols}
\end{proposition}

\begin{proof}
In appendix at page \pageref{prf:rs3-necessary}.
\end{proof}

\begin{proposition}~~\\
\label{prop:rh-necessary}
\vspace{-2em}
\begin{multicols}{2}
\begin{tikzpicture}
	\begin{pgfonlayer}{nodelayer}
		\node [style=gn] (0) at (-1.25, -0) {};
		\node [style=none] (1) at (0, -0) {=};
		\node [style=rn] (2) at (1.249999, -0) {};
		\node [style=bx] (3) at (-2, 0.75) {$\pi/2$};
		\node [style=bx] (4) at (-0.5000001, 0.75) {$\pi/2$};
		\node [style=udbx] (5) at (-2, -0.75) {$\pi/2$};
		\node [style=udbx] (6) at (-0.5000001, -0.75) {$\pi/2$};
		\node [style=none] (7) at (-2, 1.25) {};
		\node [style=none] (8) at (-0.5000001, 1.25) {};
		\node [style=none] (9) at (-2, -1.25) {};
		\node [style=none] (10) at (-0.5000001, -1.25) {};
		\node [style=none] (11) at (0.5000001, -1.25) {};
		\node [style=none] (12) at (0.5000001, 1.25) {};
		\node [style=none] (13) at (2, 1.25) {};
		\node [style=none] (14) at (2, -1.25) {};
		\node [style=none] (15) at (1.249999, -0.75) {$\cdots$};
		\node [style=none] (16) at (1.249999, 0.75) {$\cdots$};
		\node [style=none] (17) at (-1.25, 0.75) {$\cdots$};
		\node [style=none] (18) at (-1.25, -0.75) {$\cdots$};
	\end{pgfonlayer}
	\begin{pgfonlayer}{edgelayer}
		\draw [bend right=45, looseness=1.00] (3) to (0);
		\draw [bend right=45, looseness=1.25] (0) to (5);
		\draw (5) to (9.center);
		\draw (6) to (10.center);
		\draw [bend right=45, looseness=1.00] (0) to (4);
		\draw (3) to (7.center);
		\draw (4) to (8.center);
		\draw [bend right, looseness=0.75] (12.center) to (2);
		\draw [bend right, looseness=0.75] (2) to (13.center);
		\draw [bend left=45, looseness=1.25] (0) to (6);
		\draw [bend right, looseness=1.00] (2) to (11.center);
		\draw [bend left, looseness=1.00] (2) to (14.center);
	\end{pgfonlayer}
\end{tikzpicture}$\quad$\textnormal{(RH)}\\
\vfill \null \noindent
cannot be derived from the other rules.
\vfill \null
\end{multicols}
\end{proposition}

\begin{proof}
In appendix at page \pageref{prf:rh-necessary}.
\end{proof}

\section{Completeness of the \frag{2}}
\label{sec:pi_2-completeness}
The \frag{2} of the ZX-Calculus has been proven to be complete \cite{pi_2-complete}. We can prove the same result with the \reca, though it only makes use of the completeness of the $\pi$-fragment of the ZX-Calculus (ZX$_r$) \cite{pivoting}, defined as:

\begin{definition}
\label{def:zxr}
The ZX$_r$-diagrams are generated in the same way as ZX-diagrams, but with angles in $\{0,\pi\}$. Its set of rules is defined as:
\[ZX_r = \{\zxzo,\zxiv,\zxrhl\}\cup ZX\setminus\{\zxe,\zxsup,\zxeu,\zxkt\}\]
with $\qquad\hypertarget{r:zxr-rules}{}\begin{tikzpicture}
	\begin{pgfonlayer}{nodelayer}
		\node [style=gn] (0) at (-0.7499998, -0) {$~\pi~$};
		\node [style=none] (1) at (-0.7499998, 0.75) {};
		\node [style=none] (2) at (-0.7499998, -0.75) {};
		\node [style=none] (3) at (0, -0) {=};
		\node [style=gn] (4) at (0.7499998, -0) {};
		\node [style=none] (5) at (0.7499998, -0.75) {};
		\node [style=none] (6) at (0.7499998, 0.75) {};
		\node [style={{H box}}] (7) at (1.25, -0) {};
		\node [style=gn] (8) at (1.75, 0.2500001) {};
		\node [style=rn] (9) at (1.75, -0.2499999) {};
	\end{pgfonlayer}
	\begin{pgfonlayer}{edgelayer}
		\draw (1.center) to (2.center);
		\draw (6.center) to (5.center);
		\draw [bend left=90, looseness=2.00] (4) to (7);
		\draw [bend left=90, looseness=1.75] (7) to (4);
		\draw (8) to (9);
	\end{pgfonlayer}
\end{tikzpicture}\quad\textnormal{(HL)}$
\end{definition}

\begin{theorem}
\label{th:pi_2-complete}
The \frag{2} of the \reca (\repiot) is complete.
\end{theorem}

\begin{proof}
\phantomsection\label{prf:pi_2-complete}
The idea of the proof is to show that \repiot and the real stabiliser ZX-Calculus (ZX$_r$) \cite{pivoting} deal with the same matrices and have the same expressivity.

To do so, we define the interpretations:
\[ \fit{$\interp{.}^{\repiot \to ZX_r}: \left\lbrace \begin{array}{l}
D_1\circ D_2 \mapsto \interp{D_1}^{\repiot \to ZX_r} \circ \interp{D_2}^{\repiot \to ZX_r}\\
D_1\otimes D_2 \mapsto \interp{D_1}^{\repiot \to ZX_r} \otimes \interp{D_2}^{\repiot \to ZX_r}\\\\
\begin{tikzpicture}
	\begin{pgfonlayer}{nodelayer}
		\node [style=bx] (0) at (-1, 0.9999999) {$k\frac{\pi}{2}$};
		\node [style=none] (1) at (-1, 0.5000001) {};
		\node [style=none] (2) at (-1, 1.5) {};
		\node [style=none] (3) at (0, 0.9999999) {$\mapsto$};
		\node [style=none] (4) at (1.25, 1.5) {};
		\node [style=none] (5) at (1.25, 0.5000001) {};
		\node [style=none, anchor=west] (6) at (1.25, 1.5) {\scriptsize $~\circ k$};
		\node [style=gn] (7) at (1, 1.25) {$\pi$};
		\node [style={{H box}}] (8) at (1, 0.7499999) {};
		\node [style=none] (9) at (1, 1.75) {};
		\node [style=none] (10) at (1, 0.25) {};
		\node [style=none] (11) at (0.7500001, 1.5) {};
		\node [style=none] (12) at (0.7500001, 0.5000001) {};
		\node [style={{H box}}] (13) at (1, -0.5000001) {};
		\node [style=none] (14) at (1.25, -1.25) {};
		\node [style=none] (15) at (1, -1.5) {};
		\node [style=none] (16) at (0.7500001, -1.25) {};
		\node [style=none] (17) at (1.25, -0.25) {};
		\node [style=none] (18) at (0.7500001, -0.25) {};
		\node [style=none] (19) at (0, -0.7499999) {$\mapsto$};
		\node [style=none] (20) at (1, -0) {};
		\node [style=none, anchor=west] (21) at (1.25, -0.25) {\scriptsize $~\circ k$};
		\node [style=udbx] (22) at (-1, -0.7499999) {$k\frac{\pi}{2}$};
		\node [style=gn] (23) at (1, -1) {$\pi$};
		\node [style=none] (24) at (-1, -1.25) {};
		\node [style=none] (25) at (-1, -0.2499998) {};
	\end{pgfonlayer}
	\begin{pgfonlayer}{edgelayer}
		\draw (2.center) to (1.center);
		\draw [bend right=45, looseness=0.50] (5.center) to (4.center);
		\draw (9.center) to (10.center);
		\draw [bend right=45, looseness=0.50] (11.center) to (12.center);
		\draw (25.center) to (24.center);
		\draw [bend right=45, looseness=0.50] (14.center) to (17.center);
		\draw (20.center) to (15.center);
		\draw [bend right=45, looseness=0.50] (18.center) to (16.center);
	\end{pgfonlayer}
\end{tikzpicture}\\\\
Id\quad \text{otherwise}
\end{array}\right. \hspace{-2em}
\interp{.}^{ZX_r \to \repiot}: \left\lbrace \begin{array}{l}
D_1\circ D_2 \mapsto \interp{D_1}^{ZX_r \to \repiot} \circ \interp{D_2}^{ZX_r \to \repiot}\\
D_1\otimes D_2 \mapsto \interp{D_1}^{ZX_r \to \repiot} \otimes \interp{D_2}^{ZX_r \to \repiot}\\\\
\begin{tikzpicture}
	\begin{pgfonlayer}{nodelayer}
		\node [style=bx] (0) at (1, -1.5) {$\pi/2$};
		\node [style=none] (1) at (1, -2) {};
		\node [style=none] (2) at (1, -0.4999999) {};
		\node [style=none] (3) at (0, 1.5) {$\mapsto$};
		\node [style=gn] (4) at (-1, 1.5) {$\pi$};
		\node [style=none] (5) at (-1.25, 2) {};
		\node [style=none] (6) at (-0.7500001, 0.9999999) {};
		\node [style={{H box}}] (7) at (-1, -1.25) {};
		\node [style=none] (8) at (-1, -2) {};
		\node [style=none] (9) at (0, -1.25) {$\mapsto$};
		\node [style=none] (10) at (-1, -0.4999999) {};
		\node [style=gn] (11) at (1, 1.5) {};
		\node [style=gn] (12) at (2, 1.5) {};
		\node [style=rbx] (13) at (1.5, 1.5) {$\pi$};
		\node [style=rbx] (14) at (1.5, -0.9999999) {$\pi$};
		\node [style=gn] (15) at (2, -0.9999999) {};
		\node [style=gn] (16) at (1, -0.9999999) {};
		\node [style=none] (17) at (-0.7500001, 2) {};
		\node [style=none] (18) at (-1.25, 0.9999999) {};
		\node [style=none] (19) at (1.25, 2) {};
		\node [style=none] (20) at (1.25, 0.9999999) {};
		\node [style=none] (21) at (0.7500001, 2) {};
		\node [style=none] (22) at (0.7500001, 0.9999999) {};
		\node [style=none] (23) at (1.25, -0.2499998) {};
		\node [style=rn] (24) at (-1, 0.25) {$\pi$};
		\node [style=none] (25) at (0.7500001, 0.7500001) {};
		\node [style=lbx] (26) at (1.5, 0.25) {$\pi$};
		\node [style=none] (27) at (0, 0.25) {$\mapsto$};
		\node [style=none] (28) at (1.25, 0.7500001) {};
		\node [style=none] (29) at (-1.25, -0.2499998) {};
		\node [style=none] (30) at (-0.7500001, -0.2499998) {};
		\node [style=none] (31) at (-1.25, 0.7500001) {};
		\node [style=none] (32) at (0.7500001, -0.2499998) {};
		\node [style=rn] (33) at (1, 0.25) {};
		\node [style=none] (34) at (-0.7500001, 0.7500001) {};
		\node [style=rn] (35) at (2, 0.25) {};
		\node [style=none] (36) at (-1, 2) {$~\cdots~$};
		\node [style=none] (37) at (-1, 0.9999999) {$~\cdots~$};
		\node [style=none] (38) at (1, 2) {$~\cdots~$};
		\node [style=none] (39) at (-1, 0.7500001) {$~\cdots~$};
		\node [style=none] (40) at (1, 0.9999999) {$~\cdots~$};
		\node [style=none] (41) at (1, 0.7500001) {$~\cdots~$};
		\node [style=none] (42) at (-1, -0.2499998) {$~\cdots~$};
		\node [style=none] (43) at (1, -0.2499998) {$~\cdots~$};
	\end{pgfonlayer}
	\begin{pgfonlayer}{edgelayer}
		\draw (2.center) to (1.center);
		\draw [in=90, out=-90, looseness=0.75] (5.center) to (6.center);
		\draw (10.center) to (8.center);
		\draw (12) to (11);
		\draw (15) to (16);
		\draw [in=90, out=-90, looseness=0.75] (17.center) to (18.center);
		\draw [in=90, out=-90, looseness=0.75] (21.center) to (20.center);
		\draw [in=90, out=-90, looseness=0.75] (19.center) to (22.center);
		\draw [in=90, out=-90, looseness=0.75] (31.center) to (30.center);
		\draw (35) to (33);
		\draw [in=90, out=-90, looseness=0.75] (34.center) to (29.center);
		\draw [in=90, out=-90, looseness=0.75] (25.center) to (23.center);
		\draw [in=90, out=-90, looseness=0.75] (28.center) to (32.center);
	\end{pgfonlayer}
\end{tikzpicture}\\\\
Id\quad \text{otherwise}
\end{array}\right.$}\]
for $k\geq 0$ with $D^{\circ 0} = \mathbb{I}$ and $D^{\circ l} = D^{\circ l-1}\circ D$ for $l\geq 2$.\\
It is important to notice that the rule \rsup is not an axiom of the language \repiot. Indeed, \rsupc{2} can be derived from the other rules whenever $\alpha$ is a multiple of $\frac{\pi}{2}$.\\

The two interpretations both preserve the equalities of the sets of rules of respectively \repiot and ZX$_r$ -- see details at page \pageref{prf:rules-preserved}. One can easily show that they also preserve the semantics:
\[ \interp{\interp{.}^{\repiot \to ZX_r}} = \interp{.}= \interp{\interp{.}^{ZX_r \to \repiot}}\]
Moreover, for any \repiot-diagram $D$: $\repiot \vdash D = \interp{\interp{D}^{\repiot \to ZX_r}}^{ZX_r \to \repiot}$ -- see details at page \pageref{sec:pi_2-double-interpretation}.

Now, let $D_1$ and $D_2$ be two \repiot-diagrams such that $\interp{D_1}=\interp{D_2}$. The two interpretations preserve the semantics, so: 
$\interp{\interp{D_1}^{\repiot \to ZX_r}} = \interp{\interp{D_2}^{\repiot \to ZX_r}}$.\\
Since ZX$_r$ is complete \cite{pivoting}, $ZX_r \vdash \interp{D_1}^{\repiot \to ZX_r} = \interp{D_2}^{\repiot \to ZX_r}$.\\
Moreover, \repiot proves all the equalities of the ZX$_r$, so:\\$\repiot \vdash \interp{\interp{D_1}^{\repiot \to ZX_r}}^{ZX_r \to \repiot} = \interp{\interp{D_2}^{\repiot \to ZX_r}}^{ZX_r \to \repiot}$.\\
Finally, since \repiot proves that the composition of the two interpretations is the identity,
\[ \repiot \vdash D_1 = \interp{\interp{D_1}^{\repiot \to ZX_r}}^{ZX_r \to \repiot} = \interp{\interp{D_2}^{\repiot \to ZX_r}}^{ZX_r \to \repiot} = D_2 \]
which proves the completeness of \repiot.
\end{proof}

\section{From \reca to ZX-Calculus and back}
\label{sec:y-to-zx}

In this section we will explain how to transform diagrams of the
Y-calculus into diagrams of the ZX calculus in a manner that preserves
the semantics -- the diagrams represent the same matrices -- and the
proofs -- if an equality of diagrams is provable in the Y-calculus, the
equality of their images is provable in the ZX-calculus --, and we will
provide a transformation in the reverse direction.

Transforming diagrams from the Y-calculus to the ZX-calculus is easy,
as the real box is representable in the ZX-Calculus. Indeed, we can show that:
\[\interp{\begin{tikzpicture}
	\begin{pgfonlayer}{nodelayer}
		\node [style=bx] (0) at (0, -0) {$\alpha$};
		\node [style=none] (1) at (0, 0.5) {};
		\node [style=none] (2) at (0, -0.5) {};
	\end{pgfonlayer}
	\begin{pgfonlayer}{edgelayer}
		\draw (1.center) to (2.center);
	\end{pgfonlayer}
\end{tikzpicture}} = \begin{pmatrix}
\cos(\alpha/2) & -\sin(\alpha/2) \\
\sin(\alpha/2) & \cos(\alpha/2)
\end{pmatrix} =
\begin{pmatrix}1&0\\0&i\end{pmatrix}
\begin{pmatrix}
\cos(\alpha/2) & -i\sin(\alpha/2) \\
-i\sin(\alpha/2) & \cos(\alpha/2)
\end{pmatrix}
\begin{pmatrix}1&0\\0&-i\end{pmatrix} = \interp{\begin{tikzpicture}
	\begin{pgfonlayer}{nodelayer}
		\node [style=rn] (0) at (-0.5, -0) {$\alpha$};
		\node [style=gn] (1) at (-0.5, -0.75) {$\frac{\pi}{2}$};
		\node [style=gn] (2) at (-0.5, 0.75) {$\frac{-\pi}{2}$};
		\node [style=none] (3) at (-0.5, 1.25) {};
		\node [style=none] (4) at (-0.5, -1.25) {};
		\node [style=rn] (5) at (0.7500001, 0.5) {};
		\node [style=rn] (6) at (0.2499999, 0.5) {$\pi$};
		\node [style=gn] (7) at (0.7500001, -0) {};
		\node [style=gn] (8) at (0.2499999, -0.2499999) {$\frac{-\alpha}{2}$};
	\end{pgfonlayer}
	\begin{pgfonlayer}{edgelayer}
		\draw (3.center) to (4.center);
		\draw [bend right=45, looseness=1.00] (5) to (7);
		\draw [bend right=45, looseness=1.00] (7) to (5);
		\draw (5) to (7);
		\draw (6) to (8);
	\end{pgfonlayer}
\end{tikzpicture}}\]
Hence:
\[\interp{.}^{\rec \to ZX}: \left\lbrace ~~ \begin{array}{l}
D_1\circ D_2 \mapsto \interp{D_1}^{\rec \to ZX} \circ \interp{D_2}^{\rec \to ZX}\\
D_1\otimes D_2 \mapsto \interp{D_1}^{\rec \to ZX} \otimes \interp{D_2}^{\rec \to ZX}\\\\
\begin{tikzpicture}
	\begin{pgfonlayer}{nodelayer}
		\node [style=bx] (0) at (0, -0) {$\alpha$};
		\node [style=none] (1) at (0, 0.5) {};
		\node [style=none] (2) at (0, -0.5) {};
	\end{pgfonlayer}
	\begin{pgfonlayer}{edgelayer}
		\draw (1.center) to (2.center);
	\end{pgfonlayer}
\end{tikzpicture}~\mapsto~\begin{tikzpicture}
	\begin{pgfonlayer}{nodelayer}
		\node [style=rn] (0) at (-0.5, -0) {$\alpha$};
		\node [style=gn] (1) at (-0.5, -0.75) {$\frac{\pi}{2}$};
		\node [style=gn] (2) at (-0.5, 0.75) {$\frac{-\pi}{2}$};
		\node [style=none] (3) at (-0.5, 1.25) {};
		\node [style=none] (4) at (-0.5, -1.25) {};
		\node [style=rn] (5) at (0.7500001, 0.5) {};
		\node [style=rn] (6) at (0.2499999, 0.5) {$\pi$};
		\node [style=gn] (7) at (0.7500001, -0) {};
		\node [style=gn] (8) at (0.2499999, -0.2499999) {$\frac{-\alpha}{2}$};
	\end{pgfonlayer}
	\begin{pgfonlayer}{edgelayer}
		\draw (3.center) to (4.center);
		\draw [bend right=45, looseness=1.00] (5) to (7);
		\draw [bend right=45, looseness=1.00] (7) to (5);
		\draw (5) to (7);
		\draw (6) to (8);
	\end{pgfonlayer}
\end{tikzpicture}\\\\
Id~~\text{otherwise}
\end{array}\right.\]
is an application from the \reca to the ZX-Calculus that preserves the semantics.

\begin{proposition}
\label{prop:y-to-zx-interpretation}
The interpretation $\interp{.}^{\rec \to ZX}$ preserves all the rules of the \reca, so:
\[\forall D_1, D_2,\quad(\rec \vdash D_1=D_2)\implies \left(ZX\vdash \interp{D_1}^{\rec \to ZX}=\interp{D_2}^{\rec \to ZX}\right)\]
\end{proposition}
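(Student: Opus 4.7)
The plan is to verify preservation rule-by-rule. The interpretation $\interp{.}^{\rec\to ZX}$ is the identity on every generator except the real box, so for the rules (S1), (S2), (S3), (IV), (B1), (B2), the translated equations are literally ZX-calculus axioms (or direct consequences of the standard spider and bialgebra laws), so they hold in ZX with no extra work. This reduces the problem to checking the rules that mention real boxes: (RS1), (RS2), (RH), (RZO), and the family (RSUP$_n$).

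For those, I would substitute the defining expansion
\[\tikzfig{real-single-box}\;\longmapsto\;\tikzfig{real-std-decomp-zx-part}\]
and exploit the key telescoping observation that the outer green phase gates $\begin{pmatrix}1&0\\0&-i\end{pmatrix}$ and $\begin{pmatrix}1&0\\0&i\end{pmatrix}$ are mutual inverses. Whenever two real boxes appear in series on the same wire, their inner frames cancel, so an arbitrary sequence of real boxes along a wire translates to a single red $X$-rotation sandwiched between a $+\pi/2$ green phase at the top and a $-\pi/2$ green phase at the bottom. In this normal form, (RS1) becomes red spider fusion (addition of angles), (RZO) becomes the statement that a red $0$-rotation is the identity (a derivable ZX equation, up to the scalar handling), and (RS2)/(RS3) become the corresponding spider-commutation rules of the ZX-calculus after the $\pm\pi/2$ green frames are commuted past the green spider via elementary ZX manipulations.

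The rule (RH) translates, after expansion and cancellation of adjacent frames, to a form of the Euler decomposition of the Hadamard in ZX, which is a standard derivable identity (and is in fact built into the \frag{2} and \frag{4} rule sets). The main obstacle will be the supplementarity family (RSUP$_n$): after substitution, the left-hand side becomes a fan of $n$ red rotations attached to a green spider, each surrounded by a $\pm\pi/2$ green frame. I would first absorb the inner green frames into the central spider using spider fusion, turning the configuration into a standard $n$-fold supplementarity pattern of red rotations plugged into a green spider, at which point the ZX supplementarity rule (SUP$_n$) applies directly, and the remaining outer frames match those on the right-hand side after the same absorption step.

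Once each rule has been shown to translate to a ZX-provable equation, the global implication
\[(\rec\vdash D_1=D_2)\implies (ZX\vdash\interp{D_1}^{\rec\to ZX}=\interp{D_2}^{\rec\to ZX})\]
follows by the usual induction on the length of the derivation, using that the interpretation is defined compositionally on $\otimes$ and $\circ$, so rule applications inside a subdiagram translate to rule applications inside the interpreted subdiagram.
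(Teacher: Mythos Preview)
Your plan is essentially the paper's own: check preservation rule by rule, observe that the axioms without real boxes pass through unchanged, and for each rule involving a box expand the green-$(\pm\pi/2)$ / red-$\alpha$ sandwich and reduce to standard ZX identities (spider fusion and \zxkt for (RS1), a combination of \zxeu, \zxh, \zxkt, \zxbt for (RS2), and the ZX rule \zxsup for the (RSUP$_n$) family, exactly as you outline). One small correction: (RZO) is not the statement that a zero-angle rotation is trivial; it is the absorbing-zero rule for scalars, and its preservation goes through the corresponding zero law on the ZX side rather than through a spider identity.
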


\begin{proof}
In appendix at page \pageref{prf:y-to-zx-interpretation}
\end{proof}
Note that if the  diagram of the Y-calculus has angles in a fragment $\pi/k$
then the corresponding diagram of the ZX-calculus has angles (actually
scalars) in the fragment $\pi/2k$.

Going in the other direction is harder as, evidently, a matrix with
complex coefficients  is usually not a matrix with real coefficients.
There is however a way to palliate the problem by converting a complex matrix of
size $p \times p$  to a real matrix of size $2p \times 2p$,
essentially using the following coding of complex numbers into $2
\times 2$ real matrices:

\[ a+ib \mapsto \begin{pmatrix}
 a & b \\
 -b & a \\
\end{pmatrix}
\]

\label{sec:zx-to-y}

Doing so essentially adds one wire to the diagram so that a diagram $n
\rightarrow m$ will be transformed into a diagram $n+1 \rightarrow m+1$.
This leads to difficulties in the handling of the spatial composition.

Specifically, the interpretation is as follows:

\[ \interp{.}^{ZX\to\rec}: \left\lbrace\quad \begin{tikzpicture}
	\begin{pgfonlayer}{nodelayer}
		\node [style=gn] (0) at (-2.25, 1.75) {};
		\node [style=bx] (1) at (-1.25, 1.75) {$\alpha$};
		\node [style=rn] (2) at (-1.25, 2.25) {};
		\node [style=rn] (3) at (-1.25, 1.25) {};
		\node [style=bx] (4) at (-1.25, 0.7500001) {$-\alpha$};
		\node [style=none] (5) at (-2.75, 2.75) {};
		\node [style=none] (6) at (-1.25, 2.75) {};
		\node [style=none] (7) at (-1.25, 0.25) {};
		\node [style=none] (8) at (-2.75, 0.25) {};
		\node [style=none] (9) at (-1.75, 2.75) {};
		\node [style=none] (10) at (-1.75, 0.25) {};
		\node [style=none] (11) at (-2.25, 2.5) {$\cdots$};
		\node [style=none] (12) at (-2.25, 0.5000001) {$\cdots$};
		\node [style=gn] (13) at (-0.7500001, 1.75) {};
		\node [style=none] (14) at (-3.75, 0.25) {};
		\node [style=none] (15) at (-4.75, 0.25) {};
		\node [style=none] (16) at (-4.25, 2.5) {$\cdots$};
		\node [style=none] (17) at (-4.75, 2.75) {};
		\node [style=gn] (18) at (-4.25, 1.75) {$~\alpha~$};
		\node [style=none] (19) at (-3.75, 2.75) {};
		\node [style=none] (20) at (-4.25, 0.5000001) {$\cdots$};
		\node [style=none] (21) at (-3.25, 1.75) {$\mapsto$};
		\node [style=none] (22) at (-3.25, -1.25) {$\mapsto$};
		\node [style=gn] (23) at (-0.7500001, -1.25) {};
		\node [style=none] (24) at (-1.75, -2.75) {};
		\node [style=none] (25) at (-4.75, -2.75) {};
		\node [style=rn] (26) at (-4.25, -1.25) {$~\alpha~$};
		\node [style=bx] (27) at (-1.25, -1.25) {$\alpha$};
		\node [style=none] (28) at (-2.75, -2.75) {};
		\node [style=none] (29) at (-3.75, -0.25) {};
		\node [style=rn] (30) at (-1.25, -1.75) {};
		\node [style=none] (31) at (-1.25, -0.25) {};
		\node [style=none] (32) at (-4.75, -0.25) {};
		\node [style=rn] (33) at (-2.25, -1.25) {};
		\node [style=none] (34) at (-1.25, -2.75) {};
		\node [style=rn] (35) at (-1.25, -0.7500001) {};
		\node [style=none] (36) at (-2.25, -2.5) {$\cdots$};
		\node [style=none] (37) at (-4.25, -0.5000001) {$\cdots$};
		\node [style=none] (38) at (-2.75, -0.25) {};
		\node [style=bx] (39) at (-1.25, -2.25) {$-\alpha$};
		\node [style=none] (40) at (-2.25, -0.5000001) {$\cdots$};
		\node [style=none] (41) at (-1.75, -0.25) {};
		\node [style=none] (42) at (-4.25, -2.5) {$\cdots$};
		\node [style=none] (43) at (-3.75, -2.75) {};
		\node [style={{H box}}] (44) at (-1.75, -1) {};
		\node [style={{H box}}] (45) at (-1.75, -1.5) {};
		\node [style=none] (46) at (1.5, 2) {};
		\node [style=none] (47) at (1.5, 1) {};
		\node [style={{H box}}] (48) at (1.5, 1.5) {};
		\node [style=none] (49) at (3.75, 1) {};
		\node [style={{H box}}] (50) at (3.75, 1.5) {};
		\node [style=none] (51) at (3.75, 2) {};
		\node [style=none] (52) at (4.25, 1) {};
		\node [style=none] (53) at (4.25, 2) {};
		\node [style=none] (54) at (2.5, 1.5) {$\mapsto$};
		\node [style=none] (55) at (1.25, -1) {};
		\node [style=none] (56) at (4.25, -0.5000001) {};
		\node [style=none] (57) at (1.75, -1) {};
		\node [style=none] (58) at (4.25, -1) {};
		\node [style=none] (59) at (2.5, -0.7500001) {$\mapsto$};
		\node [style=none] (60) at (3.75, -1) {};
		\node [style=none] (61) at (3.25, -1) {};
		\node [style=none] (62) at (3.75, -1.5) {};
		\node [style=none] (63) at (2.5, -1.75) {$\mapsto$};
		\node [style=none] (64) at (1.25, -1.5) {};
		\node [style=none] (65) at (4.25, -1.5) {};
		\node [style=none] (66) at (3.25, -1.5) {};
		\node [style=none] (67) at (4.25, -2) {};
		\node [style=none] (68) at (1.75, -1.5) {};
		\node [style=none] (69) at (2.5, -2.75) {$\mapsto$};
		\node [style=none] (70) at (1.75, -3) {};
		\node [style=none] (71) at (4.25, -2.5) {};
		\node [style=none] (72) at (4.25, -3) {};
		\node [style=none] (73) at (1.25, -3) {};
		\node [style=none] (74) at (1.75, -2.5) {};
		\node [style=none] (75) at (1.25, -2.5) {};
		\node [style=none] (76) at (3.25, -2.5) {};
		\node [style=none] (77) at (3.25, -3) {};
		\node [style=none] (78) at (3.75, -3) {};
		\node [style=none] (79) at (3.75, -2.5) {};
		\node [style=none] (80) at (1.75, -0) {};
		\node [style=none] (81) at (2.5, 0.25) {$\mapsto$};
		\node [style=none] (82) at (1.25, 0.5000001) {};
		\node [style=none] (83) at (4.25, 0.5000001) {};
		\node [style=none] (84) at (1.25, -0) {};
		\node [style=none] (85) at (1.75, 0.5000001) {};
		\node [style=none] (86) at (4.25, -0) {};
		\node [style=none] (87) at (3.75, 3) {};
		\node [style=none] (88) at (3.75, 2.5) {};
		\node [style=none] (89) at (1.5, 3) {};
		\node [style=none] (90) at (4.25, 3) {};
		\node [style=none] (91) at (1.5, 2.5) {};
		\node [style=none] (92) at (4.25, 2.5) {};
		\node [style=none] (93) at (2.5, 2.75) {$\mapsto$};
	\end{pgfonlayer}
	\begin{pgfonlayer}{edgelayer}
		\draw [bend right, looseness=1.00] (5.center) to (0);
		\draw [bend right=15, looseness=1.00] (0) to (8.center);
		\draw (6.center) to (2);
		\draw (2) to (0);
		\draw (0) to (3);
		\draw (3) to (4);
		\draw (4) to (7.center);
		\draw (2) to (1);
		\draw (1) to (3);
		\draw [style=none, bend left, looseness=1.00] (9.center) to (0);
		\draw [style=none, bend left=15, looseness=1.00] (0) to (10.center);
		\draw [bend right, looseness=1.00] (17.center) to (18);
		\draw [bend right=15, looseness=1.00] (18) to (15.center);
		\draw [style=none, bend left, looseness=1.00] (19.center) to (18);
		\draw [style=none, bend left=15, looseness=1.00] (18) to (14.center);
		\draw [bend right, looseness=1.00] (38.center) to (33);
		\draw [bend right=15, looseness=1.00] (33) to (28.center);
		\draw (31.center) to (35);
		\draw (35) to (33);
		\draw (33) to (30);
		\draw (30) to (39);
		\draw (39) to (34.center);
		\draw (35) to (27);
		\draw (27) to (30);
		\draw [style=none, bend left, looseness=1.00] (41.center) to (33);
		\draw [style=none, bend left=15, looseness=1.00] (33) to (24.center);
		\draw [bend right, looseness=1.00] (32.center) to (26);
		\draw [bend right=15, looseness=1.00] (26) to (25.center);
		\draw [style=none, bend left, looseness=1.00] (29.center) to (26);
		\draw [style=none, bend left=15, looseness=1.00] (26) to (43.center);
		\draw (46.center) to (47.center);
		\draw (51.center) to (49.center);
		\draw (53.center) to (52.center);
		\draw [bend left=90, looseness=1.50] (55.center) to (57.center);
		\draw (56.center) to (58.center);
		\draw [bend left=90, looseness=1.50] (61.center) to (60.center);
		\draw [bend left=90, looseness=1.50] (68.center) to (64.center);
		\draw (65.center) to (67.center);
		\draw [bend left=90, looseness=1.50] (62.center) to (66.center);
		\draw (71.center) to (72.center);
		\draw [in=90, out=-90, looseness=0.75] (75.center) to (70.center);
		\draw [in=90, out=-90, looseness=0.75] (74.center) to (73.center);
		\draw [in=90, out=-90, looseness=0.75] (76.center) to (78.center);
		\draw [in=90, out=-90, looseness=0.75] (79.center) to (77.center);
		\draw (83.center) to (86.center);
		\draw [style=dashed] (82.center) to (85.center);
		\draw [style=dashed] (85.center) to (80.center);
		\draw [style=dashed] (80.center) to (84.center);
		\draw [style=dashed] (84.center) to (82.center);
		\draw (89.center) to (91.center);
		\draw (87.center) to (88.center);
		\draw (90.center) to (92.center);
	\end{pgfonlayer}
\end{tikzpicture} \right. \]

\noindent\textbf{Sequential Composition:} The interpretation is a morphism for $\circ$:
\[ D_1\circ D_2 \mapsto \interp{D_1}^{ZX\to \rec} \circ \interp{D_2}^{ZX\to \rec}\]

\noindent\textbf{Spatial Composition:}
The interpretation changes the way two side by side diagrams are represented: $\interp{.\otimes .}^{ZX\to\rec}\neq\interp{.}^{ZX\to\rec}\otimes\interp{.}^{ZX\to\rec}$.
Instead, the two interpreted diagrams share the last wire, called \textit{control wire}. Given $D_n$ a ZX-diagram with $n$ inputs and $n'$ outputs, and $D_m$ a ZX-diagram with $m$ inputs, the interpretation of $D_n$ side-by-side with $D_m$ is:
\[ \fit{$\interp{D_n\otimes D_m}^{ZX\to\rec} = \left( \mathbb{I}^{\otimes n'}\otimes\interp{D_m}^{ZX\to\rec}\right)\circ\left(\nmswapI{m}{n'}\right)\circ \left(\mathbb{I}^{\otimes m}\otimes\interp{D_n}^{ZX\to\rec}\right)\circ\left(\nmswapI{n}{m}\right)$} \]
Assuming the interpretation of $D$ is written this way:
\[ \interp{D}^{ZX\to\rec}=~\begin{tikzpicture}
	\begin{pgfonlayer}{nodelayer}
		\node [style=none] (0) at (-0.5000001, 0.5000001) {};
		\node [style=none] (1) at (1, -0.9999999) {};
		\node [style=none] (2) at (-0.5000001, -0.5000001) {};
		\node [style=none] (3) at (0.2499997, -0.9999999) {};
		\node [style=dot] (4) at (1, -0.2500001) {};
		\node [style=none] (5) at (0.5000001, 0.2500001) {};
		\node [style=none] (6) at (0.2499997, 0.5000001) {};
		\node [style=none] (7) at (0.5000001, 0.5000001) {};
		\node [style=none] (8) at (-0.2499997, -0.5000001) {};
		\node [style=none] (9) at (0.7499998, -0) {\rotatebox[origin=c]{90}{...}};
		\node [style=none] (10) at (0.5000001, -0.2500001) {};
		\node [style=none] (11) at (-0.2499997, 0.5000001) {};
		\node [style=none] (12) at (0.2499997, -0.5000001) {};
		\node [style=none] (13) at (0, -0.7499998) {...};
		\node [style=none] (14) at (0.5000001, -0.5000001) {};
		\node [style=none] (15) at (-0.2499997, -0.9999999) {};
		\node [style=dot] (16) at (1, 0.2500001) {};
		\node [style=none] (17) at (0, -0) {$D'$};
		\node [style=none] (18) at (1, 0.9999999) {};
		\node [style=none] (19) at (-0.2499997, 0.9999999) {};
		\node [style=none] (20) at (0.2499997, 0.9999999) {};
		\node [style=none] (21) at (0, 0.7499998) {...};
	\end{pgfonlayer}
	\begin{pgfonlayer}{edgelayer}
		\draw [style=dashed] (0.center) to (7.center);
		\draw [style=dashed] (7.center) to (14.center);
		\draw [style=dashed] (14.center) to (2.center);
		\draw [style=dashed] (2.center) to (0.center);
		\draw [style=none] (8.center) to (15.center);
		\draw [style=none] (12.center) to (3.center);
		\draw [style=none] (5.center) to (16);
		\draw [style=none] (10.center) to (4);
		\draw [style=none] (18.center) to (1.center);
		\draw [style=none] (19.center) to (11.center);
		\draw [style=none] (20.center) to (6.center);
	\end{pgfonlayer}
\end{tikzpicture} \]
We can roughly see the spatial composition as:
\[ \interp{D_n\otimes D_m}^{ZX\to\rec}=~\begin{tikzpicture}
	\begin{pgfonlayer}{nodelayer}
		\node [style=none] (0) at (-2, -0.7499998) {};
		\node [style=none] (1) at (-1, -0.7499998) {};
		\node [style=none] (2) at (-2, -1.75) {};
		\node [style=none] (3) at (-1, -1.75) {};
		\node [style=none] (4) at (-1.75, -0.7499998) {};
		\node [style=none] (5) at (-1.25, -0.7499998) {};
		\node [style=none] (6) at (-1.75, -1.75) {};
		\node [style=none] (7) at (-1.25, -1.75) {};
		\node [style=none] (8) at (-1, -0.9999999) {};
		\node [style=none] (9) at (-1, -1.5) {};
		\node [style=none] (10) at (-1.5, -1.25) {$D_m'$};
		\node [style=none] (11) at (-2.5, -0.7499998) {};
		\node [style=none] (12) at (-3, -0.7499998) {};
		\node [style=none] (13) at (-1.75, -2.25) {};
		\node [style=none] (14) at (-1.25, -2.25) {};
		\node [style=none] (15) at (-3, -2.25) {};
		\node [style=none] (16) at (-2.5, -2.25) {};
		\node [style=none] (17) at (-0.5000001, -2.25) {};
		\node [style=none] (18) at (-1, 0.5000001) {};
		\node [style=none] (19) at (-1.75, 1.25) {};
		\node [style=none] (20) at (-1, 0.9999999) {};
		\node [style=none] (21) at (-2, 1.25) {};
		\node [style=none] (22) at (-1, 0.2500001) {};
		\node [style=none] (23) at (-1.25, 1.25) {};
		\node [style=none] (24) at (-1.5, 0.7499998) {$D_n'$};
		\node [style=none] (25) at (-2.5, 0.2500001) {};
		\node [style=none] (26) at (-1.25, 0.2500001) {};
		\node [style=none] (27) at (-1, 1.25) {};
		\node [style=none] (28) at (-2, 0.2500001) {};
		\node [style=none] (29) at (-1.75, 0.2500001) {};
		\node [style=none] (30) at (-3, 0.2500001) {};
		\node [style=none] (31) at (-2.5, 2.25) {};
		\node [style=none] (32) at (-1.75, 2.25) {};
		\node [style=none] (33) at (-1.25, 2.25) {};
		\node [style=none] (34) at (-0.5000001, 2.25) {};
		\node [style=none] (35) at (-3, 2.25) {};
		\node [style=none] (36) at (-3, 1.25) {};
		\node [style=none] (37) at (-2.5, 1.25) {};
		\node [style=dot] (38) at (-0.5000001, -0.9999999) {};
		\node [style=dot] (39) at (-0.5000001, -1.5) {};
		\node [style=dot] (40) at (-0.5000001, 0.9999999) {};
		\node [style=dot] (41) at (-0.5000001, 0.5000001) {};
		\node [style=none] (42) at (0, -0) {=};
		\node [style=none] (43) at (2.5, -0.7499998) {};
		\node [style=none] (44) at (1.5, 0.2500001) {};
		\node [style=dot] (45) at (3.25, 0.5000001) {};
		\node [style=none] (46) at (1.5, 1.25) {};
		\node [style=none] (47) at (0.7499998, 1.25) {};
		\node [style=none] (48) at (0.5000001, 0.2500001) {};
		\node [style=none] (49) at (1.25, 0.2500001) {};
		\node [style=dot] (50) at (3.25, -0.9999999) {};
		\node [style=none] (51) at (1, 0.7499998) {$D_n'$};
		\node [style=none] (52) at (2.75, -1.5) {};
		\node [style=none] (53) at (1.75, -0.7499998) {};
		\node [style=none] (54) at (3.25, 2.25) {};
		\node [style=none] (55) at (2, 2.25) {};
		\node [style=none] (56) at (0.7499998, -2.25) {};
		\node [style=none] (57) at (1.25, 2.25) {};
		\node [style=none] (58) at (2.5, -1.75) {};
		\node [style=dot] (59) at (3.25, 0.9999999) {};
		\node [style=none] (60) at (2.75, -0.7499998) {};
		\node [style=none] (61) at (0.7499998, 2.25) {};
		\node [style=none] (62) at (2.25, -1.25) {$D_m'$};
		\node [style=none] (63) at (0.7499998, 0.2500001) {};
		\node [style=none] (64) at (2.75, -0.9999999) {};
		\node [style=none] (65) at (2, -0.7499998) {};
		\node [style=none] (66) at (2, -2.25) {};
		\node [style=none] (67) at (2.5, -2.25) {};
		\node [style=none] (68) at (2.75, -1.75) {};
		\node [style=none] (69) at (2, -1.75) {};
		\node [style=none] (70) at (3.25, -2.25) {};
		\node [style=dot] (71) at (3.25, -1.5) {};
		\node [style=none] (72) at (1.5, 0.9999999) {};
		\node [style=none] (73) at (2.5, 2.25) {};
		\node [style=none] (74) at (1.75, -1.75) {};
		\node [style=none] (75) at (1.25, -2.25) {};
		\node [style=none] (76) at (0.5000001, 1.25) {};
		\node [style=none] (77) at (1.25, 1.25) {};
		\node [style=none] (78) at (1.5, 0.5000001) {};
		\node [style=none] (79) at (-0.7499998, 0.7499998) {\rotatebox[origin=c]{90}{...}};
		\node [style=none] (80) at (-0.7499998, -1.25) {\rotatebox[origin=c]{90}{...}};
		\node [style=none] (81) at (3, -1.25) {\rotatebox[origin=c]{90}{...}};
		\node [style=none] (82) at (1.75, 0.7499998) {\rotatebox[origin=c]{90}{...}};
		\node [style=none] (83) at (-2.75, 0.7499998) {...};
		\node [style=none] (84) at (-2.75, -1.5) {...};
		\node [style=none] (85) at (-1.5, -2) {...};
		\node [style=none] (86) at (-2.75, 2.25) {...};
		\node [style=none] (87) at (1, 1.75) {...};
		\node [style=none] (88) at (1, -0.2500001) {...};
		\node [style=none] (89) at (2.25, 1.75) {...};
		\node [style=none] (90) at (2.25, -2) {...};
	\end{pgfonlayer}
	\begin{pgfonlayer}{edgelayer}
		\draw [style=dashed] (0.center) to (1.center);
		\draw [style=dashed] (1.center) to (3.center);
		\draw [style=dashed] (3.center) to (2.center);
		\draw [style=dashed] (2.center) to (0.center);
		\draw [style=dashed] (21.center) to (27.center);
		\draw [style=dashed] (27.center) to (22.center);
		\draw [style=dashed] (22.center) to (28.center);
		\draw [style=dashed] (28.center) to (21.center);
		\draw [style=none] (34.center) to (17.center);
		\draw [style=none] (36.center) to (30.center);
		\draw [style=none] (37.center) to (25.center);
		\draw [style=none] (12.center) to (15.center);
		\draw [style=none] (11.center) to (16.center);
		\draw [style=none] (6.center) to (13.center);
		\draw [style=none] (7.center) to (14.center);
		\draw [style=none] (18.center) to (41);
		\draw [style=none] (20.center) to (40);
		\draw [style=none] (8.center) to (38);
		\draw [style=none] (9.center) to (39);
		\draw [style=none, in=-90, out=90, looseness=0.75] (4.center) to (30.center);
		\draw [style=none, in=90, out=-90, looseness=0.75] (25.center) to (5.center);
		\draw [style=none, in=90, out=-90, looseness=0.75] (29.center) to (12.center);
		\draw [style=none, in=-90, out=90, looseness=0.75] (11.center) to (26.center);
		\draw [style=none, in=-90, out=90, looseness=0.75] (19.center) to (35.center);
		\draw [style=none, in=90, out=-90, looseness=0.75] (31.center) to (23.center);
		\draw [style=none, in=90, out=-90, looseness=0.75] (32.center) to (36.center);
		\draw [style=none, in=-90, out=90, looseness=0.75] (37.center) to (33.center);
		\draw [style=dashed] (53.center) to (60.center);
		\draw [style=dashed] (60.center) to (68.center);
		\draw [style=dashed] (68.center) to (74.center);
		\draw [style=dashed] (74.center) to (53.center);
		\draw [style=dashed] (76.center) to (46.center);
		\draw [style=dashed] (46.center) to (44.center);
		\draw [style=dashed] (44.center) to (48.center);
		\draw [style=dashed] (48.center) to (76.center);
		\draw [style=none] (54.center) to (70.center);
		\draw [style=none] (69.center) to (66.center);
		\draw [style=none] (58.center) to (67.center);
		\draw [style=none] (78.center) to (45);
		\draw [style=none] (72.center) to (59);
		\draw [style=none] (64.center) to (50);
		\draw [style=none] (52.center) to (71);
		\draw [style=none, in=-90, out=90, looseness=0.75] (47.center) to (61.center);
		\draw [style=none, in=90, out=-90, looseness=0.75] (57.center) to (77.center);
		\draw [style=none] (63.center) to (56.center);
		\draw [style=none] (49.center) to (75.center);
		\draw [style=none] (55.center) to (65.center);
		\draw [style=none] (73.center) to (43.center);
	\end{pgfonlayer}
\end{tikzpicture} \]

\begin{lemma}
\label{lem:commutation-on-control-wire}
All the subdiagrams generated by the interpretation can commute on the control wire.
\end{lemma}
\begin{proof}
In appendix at page \pageref{prf:commutation-on-control-wire}.
\end{proof}

\noindent
Now with this result, we can show:
\begin{itemize}
\item $\interp{(A_1\otimes B_1)\circ(A_2\otimes B_2)}^{ZX\to\rec} = \interp{(A_1\circ A2)\otimes(B_1\circ B2)}^{ZX\to\rec}$ if the number of outputs of $A_2$ (resp. $B_2$) corresponds to the number of inputs of $A_1$ (resp. $B_1$)
\item $\interp{(D_1\otimes D_2)\otimes D_3}^{ZX\to\rec}=\interp{D_1\otimes (D_2\otimes D_3)}^{ZX\to\rec}$
\item $\interp{e\otimes D}^{ZX\to\rec} = \interp{D \otimes e}^{ZX\to\rec}= \interp{D}^{ZX\to\rec}$
\item $\interp{(D_1\otimes D_2)\circ \sigma}^{ZX\to\rec}=\interp{\sigma \circ(D_2\otimes D_1)}^{ZX\to\rec}$ for any 1-input/1-output diagrams $D_1$ and $D_2$
\item Any topological property of the ZX-Calculus is preserved.
\end{itemize}

\begin{proposition}
\label{prop:zx-to-y-interpretation}
All the rules of the ZX-Calculus -- see figure \ref{fig:ZX_rules} -- are preserved with the interpretation $\interp{.}^{ZX\to\rec}$:
\[\forall D_1, D_2,\quad(ZX \vdash D_1=D_2)\implies \left(\rec\vdash \interp{D_1}^{ZX\to\rec}=\interp{D_2}^{ZX\to\rec}\right)\]
\end{proposition}
\begin{proof}
In appendix at page \pageref{prf:zx-to-y-interpretation}.
\end{proof}

\begin{proposition}
\label{prop:zx2y-decomp}
For any diagram $D$:
\[
\interp{\interp{D}^{ZX\to\rec}} = \Re(\interp{D})\otimes I_2 + \Im(\interp{D}) \otimes \iY
\]
\end{proposition}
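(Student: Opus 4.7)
The plan is to prove the equation by structural induction on the ZX-diagram $D$, exploiting the algebraic fact that the real matrices $I_2$ and $\iY$ satisfy $\iY^2 = -I_2$, so the subalgebra $\{A\otimes I_2 + B\otimes \iY \mid A,B \text{ real}\}$ is isomorphic to the algebra of complex matrices via $A\otimes I_2 + B\otimes \iY \mapsto A+iB$. Once this is observed, the statement essentially says that $\interp{.}^{ZX\to\rec}$ implements this isomorphism on the level of diagrams.

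For the base cases I would check each generator of the ZX-Calculus directly against the definition of $\interp{.}^{ZX\to\rec}$. The Z-spider has a real interpretation, so its image under $\interp{.}^{ZX\to\rec}$ must be (up to the added control wire) $\interp{Z\text{-spider}}\otimes I_2 + 0\otimes \iY$, which can be verified from the rewriting recipe defining the interpretation. The Hadamard and the $\pi$-dot are likewise real. The only truly non-trivial generator is the phase $\alpha$ (or the Z-phase spider), whose complex interpretation splits into $\cos(\alpha/2)$ on $I_2$ and $\sin(\alpha/2)$ on $\iY$; this matches exactly the semantics of the real box $R_Y(\alpha)$ tensored with the control wire through the notation introduced in Section~\ref{sec:hadamard}.

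For sequential composition, suppose $\interp{D_1} = A_1 + iB_1$ and $\interp{D_2} = A_2 + iB_2$ with the induction hypothesis giving $\interp{\interp{D_j}^{ZX\to\rec}} = A_j\otimes I_2 + B_j\otimes \iY$. Since $\interp{.}^{ZX\to\rec}$ commutes with $\circ$ (both the ZX-level and the Y-level sequential composition act by matrix product), we expand
\[
(A_1\otimes I_2 + B_1\otimes \iY)\circ(A_2\otimes I_2 + B_2\otimes \iY)
\]
and use $\iY^2 = -I_2$ to collect the $I_2$ and $\iY$ parts; the result is $(A_1 A_2 - B_1 B_2)\otimes I_2 + (A_1 B_2 + B_1 A_2)\otimes \iY$, which is precisely $\Re(\interp{D_1\circ D_2})\otimes I_2 + \Im(\interp{D_1\circ D_2})\otimes \iY$.

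The spacial composition case is the main obstacle, because $\interp{D_1\otimes D_2}^{ZX\to\rec}$ is \emph{not} $\interp{D_1}^{ZX\to\rec}\otimes \interp{D_2}^{ZX\to\rec}$: the two interpreted subdiagrams share a single control wire, interleaved via the swap gadgets described in the previous section. Here I would plug in the induction hypotheses
\[
\interp{\interp{D_j}^{ZX\to\rec}} = A_j\otimes I_2 + B_j\otimes \iY
\]
into the explicit formula
\[
\interp{D_n\otimes D_m}^{ZX\to\rec} = \bigl(\mathbb{I}^{\otimes n'}\otimes \interp{D_m}^{ZX\to\rec}\bigr)\circ (\text{swap})\circ \bigl(\mathbb{I}^{\otimes m}\otimes \interp{D_n}^{ZX\to\rec}\bigr)\circ (\text{swap})
\]
and take the semantics. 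The swaps at the matrix level reorder tensor factors so that each $A_j$/$B_j$ acts on its own wires while the single control factor accumulates a product of $I_2$'s and $\iY$'s from both sides. Using $\iY^2 = -I_2$ once more to merge the two control contributions, the resulting matrix simplifies to
\[
(A_1\otimes A_2 - B_1\otimes B_2)\otimes I_2 + (A_1\otimes B_2 + B_1\otimes A_2)\otimes \iY,
\]
which is exactly $\Re\bigl((A_1+iB_1)\otimes(A_2+iB_2)\bigr)\otimes I_2 + \Im\bigl((A_1+iB_1)\otimes(A_2+iB_2)\bigr)\otimes \iY$, completing the induction. The delicate point is bookkeeping the order in which the control wire picks up its $\iY$ factors; the commutation lemma on the control wire recalled just before the proposition (together with $\iY^2=-I_2$) is exactly what makes this rearrangement legal.
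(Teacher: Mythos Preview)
Your proposal follows essentially the same structural-induction argument as the paper: check the generators directly, then close under sequential composition using $\iY^2=-I_2$, and under spacial composition by plugging the induction hypotheses into the explicit swap formula and collapsing the shared control wire. One small slip: the $Z$-phase spider $\begin{pmatrix}1&0\\0&e^{i\alpha}\end{pmatrix}$ has real/imaginary parts $\operatorname{diag}(1,\cos\alpha)$ and $\operatorname{diag}(0,\sin\alpha)$, not $\cos(\alpha/2)$ and $\sin(\alpha/2)$ as you wrote, but this is just a detail of the base-case computation and does not affect the structure of the proof.
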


\begin{proof}
In appendix at page \pageref{prf:zx2y-decomp}
\end{proof}

The two interpretations above show that the rules of the Y-calculus we give are
the right ones: they are able to prove all rules of the ZX-calculus, and they
are all provable in the ZX-calculus.
We can make this more formal:
\begin{proposition}
\label{prop:inverses}
One can retrieve the initial diagram after the composition of both interpretations:
\[\forall D\in\rec,~~ \rec\vdash\left(\begin{tikzpicture}
	\begin{pgfonlayer}{nodelayer}
		\node [style=rn] (0) at (0.5000001, -0.2500001) {};
		\node [style=none] (1) at (0.5000001, 0.2500001) {};
		\node [style=none] (2) at (0, 0.2500001) {};
		\node [style=none] (3) at (0, -0.2500001) {};
		\node [style=none] (4) at (-1, 0.2500001) {};
		\node [style=none] (5) at (-1, -0.2500001) {};
		\node [style=none] (6) at (-0.5000001, -0) {$~\cdots~$};
		\node [style=rn] (7) at (1, 0.2500001) {};
		\node [style=gn] (8) at (1, -0.2500001) {};
	\end{pgfonlayer}
	\begin{pgfonlayer}{edgelayer}
		\draw (4.center) to (5.center);
		\draw (2.center) to (3.center);
		\draw (1.center) to (0);
		\draw [bend right=45, looseness=1.00] (7) to (8);
		\draw [bend right=45, looseness=1.00] (8) to (7);
		\draw (7) to (8);
	\end{pgfonlayer}
\end{tikzpicture}\right)\circ\interp{\interp{D}^{\rec\to ZX}}^{ZX\to\rec}\circ\left(\begin{tikzpicture}
	\begin{pgfonlayer}{nodelayer}
		\node [style=rn] (0) at (0.5000001, 0.2500001) {};
		\node [style=none] (1) at (0.5000001, -0.2500001) {};
		\node [style=none] (2) at (0, 0.2500001) {};
		\node [style=none] (3) at (0, -0.2500001) {};
		\node [style=none] (4) at (-1, 0.2500001) {};
		\node [style=none] (5) at (-1, -0.2500001) {};
		\node [style=none] (6) at (-0.5000001, -0) {$~\cdots~$};
		\node [style=rn] (7) at (1, 0.2500001) {};
		\node [style=gn] (8) at (1, -0.2500001) {};
	\end{pgfonlayer}
	\begin{pgfonlayer}{edgelayer}
		\draw (4.center) to (5.center);
		\draw (2.center) to (3.center);
		\draw (1.center) to (0);
		\draw [bend right=45, looseness=1.00] (7) to (8);
		\draw [bend right=45, looseness=1.00] (8) to (7);
		\draw (7) to (8);
	\end{pgfonlayer}
\end{tikzpicture}\right)=D \]
\[\forall D\in ZX,~~ ZX\vdash\left(\begin{tikzpicture}
	\begin{pgfonlayer}{nodelayer}
		\node [style=rn] (0) at (0.5000001, -0.2500001) {};
		\node [style=none] (1) at (0.5000001, 0.2500001) {};
		\node [style=none] (2) at (0, 0.2500001) {};
		\node [style=none] (3) at (0, -0.2500001) {};
		\node [style=none] (4) at (-1, 0.2500001) {};
		\node [style=none] (5) at (-1, -0.2500001) {};
		\node [style=none] (6) at (-0.5000001, -0) {$~\cdots~$};
		\node [style=rn] (7) at (1, 0.2500001) {};
		\node [style=gn] (8) at (1, -0.2500001) {};
	\end{pgfonlayer}
	\begin{pgfonlayer}{edgelayer}
		\draw (4.center) to (5.center);
		\draw (2.center) to (3.center);
		\draw (1.center) to (0);
		\draw [bend right=45, looseness=1.00] (7) to (8);
		\draw [bend right=45, looseness=1.00] (8) to (7);
		\draw (7) to (8);
	\end{pgfonlayer}
\end{tikzpicture}\right)\circ\interp{\interp{D}^{ZX\to\rec}}^{\rec\to ZX}\circ\left(\begin{tikzpicture}
	\begin{pgfonlayer}{nodelayer}
		\node [style=none] (0) at (0.5000001, -0.2500001) {};
		\node [style=none] (1) at (0, 0.2500001) {};
		\node [style=none] (2) at (0, -0.2500001) {};
		\node [style=none] (3) at (-1, 0.2500001) {};
		\node [style=none] (4) at (-1, -0.2500001) {};
		\node [style=none] (5) at (-0.5000001, -0) {$~\cdots~$};
		\node [style=gn] (6) at (0.5000001, 0.2500001) {$\frac{\pi}{2}$};
	\end{pgfonlayer}
	\begin{pgfonlayer}{edgelayer}
		\draw (3.center) to (4.center);
		\draw (1.center) to (2.center);
		\draw (6) to (0.center);
	\end{pgfonlayer}
\end{tikzpicture}\right)=D \]
\end{proposition}

\begin{proof}
In appendix at page \pageref{prf:inverses}.
\end{proof}
\begin{corollary}
 If $ZX \vdash \interp{\interp{D_1}^{ZX\to\rec}}^{\rec\to ZX} = 
 \interp{\interp{D_2}^{ZX\to\rec}}^{\rec\to ZX}$ then $ZX \vdash D_1 = D_2$.

 If $Y \vdash \interp{\interp{D_1}^{\rec\to ZX}}^{ZX \to\rec} = 
\interp{\interp{D_2}^{\rec\to ZX}}^{ZX \to\rec} $ then $Y \vdash D_1 = D_2$.
\end{corollary}

As a consequence:

\begin{theorem}
 The ZX-Calculus is complete if and only if the Y-Calculus is complete.  
\end{theorem}
\begin{proof}
 Suppose that the ZX-Calculus is complete.
 Let $D_1, D_2$ be two diagrams of the Y-Calculus s.t. $\interp{D_1} =
 \interp{D_2}$.
 As the interpretation $\interp{\cdot}^{\rec\to ZX}$ preserves semantics,
 $\interp{\interp{D_1}^{\rec\to ZX}} = \interp{\interp{D_2}^{\rec\to ZX}}$.
 As the ZX-Calculus is complete, $ZX \vdash     \interp{D_1}^{\rec\to ZX} =
 \interp{D_2}^{\rec\to ZX}$.
 As the transformation preserves provability,
 $Y \vdash \interp{\interp{D_1}^{\rec\to ZX}}^{ZX \to\rec} =
 \interp{\interp{D_2}^{\rec\to ZX}}^{ZX \to\rec}$.
 Hence $Y \vdash D_1 = D_2$ by the previous corollary.\\
 The other direction follows mutatis mutandis.  
\end{proof}
The result above is only true for the full ZX-Calculus with arbitrary angles:
Starting from a diagram in the Y-Calculus with a angle $\alpha$, 
the interpretation $\interp{.}^{\rec\to ZX}$ might introduce the angle
$\alpha/2$.
There is a way around this problem that we will explain in a subsequent  paper.

To finish, we explain how the two interpretations also explain how to extract
the real and imaginary parts of a ZX-diagram.
\begin{corollary}
Let $D$ be a ZX-diagram, and the interpretation $\interp{.}^{\natural}$ be either $\interp{.}^{ZX\to\rec}$ or $\interp{\interp{.}^{ZX\to\rec}}^{\rec\to ZX}$. Let us define $\Re(D)$ and $\Im(D)$ as follows:
\[ \Re(D) = \left(\begin{tikzpicture}
	\begin{pgfonlayer}{nodelayer}
		\node [style=none] (0) at (-1, 0.25) {};
		\node [style=none] (1) at (-1, -0.25) {};
		\node [style=none] (2) at (-0.5, -0) {...};
		\node [style=none] (3) at (0, 0.25) {};
		\node [style=none] (4) at (0, -0.25) {};
		\node [style=none] (5) at (0.5, 0.25) {};
		\node [style=rn] (6) at (0.5, -0.25) {};
		\node [style=rn] (7) at (1, 0.25) {};
		\node [style=gn] (8) at (1, -0.25) {};
	\end{pgfonlayer}
	\begin{pgfonlayer}{edgelayer}
		\draw [bend right=45, looseness=1.00] (7) to (8);
		\draw [bend right=45, looseness=1.00] (8) to (7);
		\draw (7) to (8);
		\draw (0.center) to (1.center);
		\draw (3.center) to (4.center);
		\draw (5.center) to (6);
	\end{pgfonlayer}
\end{tikzpicture}\right)\circ\interp{D}^{\natural}\circ \left(\begin{tikzpicture}
	\begin{pgfonlayer}{nodelayer}
		\node [style=none] (0) at (-1, 0.25) {};
		\node [style=none] (1) at (-1, -0.25) {};
		\node [style=none] (2) at (-0.5, -0) {...};
		\node [style=none] (3) at (0, 0.25) {};
		\node [style=none] (4) at (0, -0.25) {};
		\node [style=none] (5) at (0.5, -0.25) {};
		\node [style=rn] (6) at (0.5, 0.25) {};
		\node [style=rn] (7) at (1, 0.25) {};
		\node [style=gn] (8) at (1, -0.25) {};
	\end{pgfonlayer}
	\begin{pgfonlayer}{edgelayer}
		\draw [bend right=45, looseness=1.00] (7) to (8);
		\draw [bend right=45, looseness=1.00] (8) to (7);
		\draw (7) to (8);
		\draw (0.center) to (1.center);
		\draw (3.center) to (4.center);
		\draw (5.center) to (6);
	\end{pgfonlayer}
\end{tikzpicture}\right)\]
\[ \Im(D) = \left(\begin{tikzpicture}
	\begin{pgfonlayer}{nodelayer}
		\node [style=none] (0) at (-1, 0.25) {};
		\node [style=none] (1) at (-1, -0.25) {};
		\node [style=none] (2) at (-0.5, -0) {...};
		\node [style=none] (3) at (0, 0.25) {};
		\node [style=none] (4) at (0, -0.25) {};
		\node [style=none] (5) at (0.5, 0.25) {};
		\node [style=rn] (6) at (0.5, -0.25) {};
		\node [style=rn] (7) at (1, 0.25) {};
		\node [style=gn] (8) at (1, -0.25) {};
	\end{pgfonlayer}
	\begin{pgfonlayer}{edgelayer}
		\draw [bend right=45, looseness=1.00] (7) to (8);
		\draw [bend right=45, looseness=1.00] (8) to (7);
		\draw (7) to (8);
		\draw (0.center) to (1.center);
		\draw (3.center) to (4.center);
		\draw (5.center) to (6);
	\end{pgfonlayer}
\end{tikzpicture}\right)\circ\interp{D}^{\natural}\circ \left(\begin{tikzpicture}
	\begin{pgfonlayer}{nodelayer}
		\node [style=none] (0) at (-1, 0.25) {};
		\node [style=none] (1) at (-1, -0.25) {};
		\node [style=none] (2) at (-0.5, -0) {...};
		\node [style=none] (3) at (0, 0.25) {};
		\node [style=none] (4) at (0, -0.25) {};
		\node [style=none] (5) at (0.5, -0.25) {};
		\node [style=rn] (6) at (0.5, 0.25) {$\pi$};
		\node [style=rn] (7) at (1, 0.25) {};
		\node [style=gn] (8) at (1, -0.25) {};
	\end{pgfonlayer}
	\begin{pgfonlayer}{edgelayer}
		\draw [bend right=45, looseness=1.00] (7) to (8);
		\draw [bend right=45, looseness=1.00] (8) to (7);
		\draw (7) to (8);
		\draw (0.center) to (1.center);
		\draw (3.center) to (4.center);
		\draw (5.center) to (6);
	\end{pgfonlayer}
\end{tikzpicture}\right)\]
Then $\interp{\Re(D)} = \Re(\interp{D})$ and $\interp{\Im(D)} = \Im(\interp{D})$
\end{corollary}
\begin{proof} Let $A$ and $B$ be two real matrices such that $\interp{D}=A+iB$.
\begin{align*}
\left\llbracket\left(\begin{tikzpicture}
	\begin{pgfonlayer}{nodelayer}
		\node [style=none] (0) at (-1, 0.25) {};
		\node [style=none] (1) at (-1, -0.25) {};
		\node [style=none] (2) at (-0.5, -0) {...};
		\node [style=none] (3) at (0, 0.25) {};
		\node [style=none] (4) at (0, -0.25) {};
		\node [style=none] (5) at (0.5, 0.25) {};
		\node [style=rn] (6) at (0.5, -0.25) {};
		\node [style=rn] (7) at (1, 0.25) {};
		\node [style=gn] (8) at (1, -0.25) {};
	\end{pgfonlayer}
	\begin{pgfonlayer}{edgelayer}
		\draw [bend right=45, looseness=1.00] (7) to (8);
		\draw [bend right=45, looseness=1.00] (8) to (7);
		\draw (7) to (8);
		\draw (0.center) to (1.center);
		\draw (3.center) to (4.center);
		\draw (5.center) to (6);
	\end{pgfonlayer}
\end{tikzpicture}\right)\right.&\left.\circ\interp{D}^{ZX\to\rec}\circ \left(\begin{tikzpicture}
	\begin{pgfonlayer}{nodelayer}
		\node [style=none] (0) at (-1, 0.25) {};
		\node [style=none] (1) at (-1, -0.25) {};
		\node [style=none] (2) at (-0.5, -0) {...};
		\node [style=none] (3) at (0, 0.25) {};
		\node [style=none] (4) at (0, -0.25) {};
		\node [style=none] (5) at (0.5, -0.25) {};
		\node [style=rn] (6) at (0.5, 0.25) {};
		\node [style=rn] (7) at (1, 0.25) {};
		\node [style=gn] (8) at (1, -0.25) {};
	\end{pgfonlayer}
	\begin{pgfonlayer}{edgelayer}
		\draw [bend right=45, looseness=1.00] (7) to (8);
		\draw [bend right=45, looseness=1.00] (8) to (7);
		\draw (7) to (8);
		\draw (0.center) to (1.center);
		\draw (3.center) to (4.center);
		\draw (5.center) to (6);
	\end{pgfonlayer}
\end{tikzpicture}\right)\right\rrbracket \\
&= \left(I\otimes \tredstate \right)\circ\left(A\otimes I_2 + B\otimes \iY\right)\circ\left(I\otimes \redstate\right)\\
&= A\otimes \left(\tredstate I_2\redstate\right) + B\otimes \left(\tredstate\iY\redstate\right) = A
\end{align*}
The proof is the same for the imaginary part, and for the other interpretation.
\end{proof}

This corollary is very helpful to show results on universality:

\begin{proposition}
The \reca is universal for real quantum transformations:
\[ \forall M\in \mathbb{R}^{2^n}\times\mathbb{R}^{2^m}, \exists D\in \rec, \interp{D} = M \]
\end{proposition}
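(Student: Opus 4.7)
The plan is to derive the universality of the \reca as an essentially immediate consequence of two ingredients already in place: (i) the universality of the ZX-Calculus, recalled in the introduction of the paper, and (ii) the interpretation $\interp{.}^{ZX\to\rec}$ together with the real-part extraction of the preceding corollary.

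Given a real matrix $M$ representing a linear map $\mathbb{R}^{2^n}\to\mathbb{R}^{2^m}$, I would start by viewing $M$ as a complex matrix with $\Im(M)=0$. By universality of the ZX-Calculus, there exists a ZX-diagram $D_{ZX}:n\to m$ such that $\interp{D_{ZX}} = M$. I would then define the candidate \reda
$$D := \Re(D_{ZX}),$$
as in the preceding corollary, taking $\interp{.}^{\natural} = \interp{.}^{ZX\to\rec}$. Explicitly, $D$ is obtained by pre- and post-composing $\interp{D_{ZX}}^{ZX\to\rec}$ with a $\redstate$-state and its transpose on the extra control wire introduced by $\interp{.}^{ZX\to\rec}$. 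The corollary then immediately yields
$$\interp{D} = \Re(\interp{D_{ZX}}) = \Re(M) = M,$$
so $D$ is the desired \reda.

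Because the hard work is already encapsulated in Proposition~\ref{prop:zx2y-decomp} (whence the corollary), no genuine obstacle remains; the only mild point worth noting is arity bookkeeping, namely that the extra control wire added by $\interp{.}^{ZX\to\rec}$ is cleanly removed by the projection, since $\tredstate\, I_2\, \redstate = 1$ (while $\tredstate\, \iY\, \redstate = 0$, a term that vanishes anyway because $\Im(M)=0$). In particular, had one wanted to prove universality directly inside the \reca, the combinatorial work of decomposing an arbitrary real matrix into Y-generators would be substantial; the interpretation machinery developed in this section bypasses it entirely by re-using the existing ZX decompositions.
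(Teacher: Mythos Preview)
Your proof is correct and follows exactly the same approach as the paper: take a ZX-diagram $D_{ZX}$ representing $M$ by universality of ZX, set $D=\Re(D_{ZX})$ via $\interp{.}^{ZX\to\rec}$, and apply the preceding corollary to get $\interp{D}=\Re(M)=M$. The additional remarks you make about arity bookkeeping are accurate but not strictly needed, since the corollary already handles the control wire.
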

\iffalse
\begin{proof}
Let $M\in \mathbb{R}^{2^n}\times\mathbb{R}^{2^m}$. Since the ZX-Calculus is universal, there exists a ZX-diagram $D_{ZX}$ such that $\interp{D_{ZX}}=M$.\\
Let $D$ be the \reda defined as $D=\Re(D_{ZX})$, with $\Re$ defined with $\interp{.}^{ZX\to\rec}$. Then:
\[ \interp{D} = \interp{\Re(D_{ZX})} = \Re(\interp{D_{ZX}}) = \Re(M) = M \]
Hence, $\forall M\in \mathbb{R}^{2^n}\times\mathbb{R}^{2^m}, \exists D\in \rec, \interp{D} = M$, which proves the universality.
\end{proof}
\fi

\begin{proposition}
$\reind{\pi/4}$, the fragment of \rec-calculus that only uses angles multiples of
$\pi/4$ is approximately universal.
\end{proposition}
\iffalse
\begin{proof}
Let $M\in \mathbb{R}^{2^n}\times\mathbb{R}^{2^m}$, $\epsilon>0$ and $S$ such that the $ZX_S$ is approximately universal. Then, there exists a diagram of the $ZX_S$, $D_{ZX}$, such that $\norm{\interp{D_{ZX}}-M}\leq \epsilon$. Let $D$ be the \reda of the $S$-fragment defined as $D=\Re(D_{ZX})$ -- we shall notice that the interpretation $\interp{.}^{ZX\to\rec}$ does not change the fragment that needs be considered. Then:
\begin{align*}\norm{\interp{D}-M} &= \norm{\interp{\Re(D_{ZX})}-M} = \norm{\Re(\interp{D_{ZX}})-M} \\ &= \norm{\Re(\interp{D_{ZX}}-M)}\leq \norm{\interp{D_{ZX}}-M} \leq \epsilon
\end{align*} 
\end{proof}
\fi

\bibliographystyle{eptcs}

\section{Appendix}

\textbf{Notation:} The boxes with $\pm\frac{\pi}{2}$ angles will be written
% [inline block 1: 54 envs, 130220 chars -> data_tex | \begin{tikzpicture} 	\begin{pgfonlayer}{nodelayer}...]
\]
\end{proof}

\subsection{Minimality}

\begin{proof}[Proof of Proposition \ref{prop:rs3-necessary}]
\hlabel{prf:rs3-necessary}
Let us consider the circular permutation $\sigma_n: k\mapsto (k+1)\mod n$, $(k\in \llbracket 0,n-1\rrbracket)$.\\
First, notice that: $\forall p\in \mathbb{Z}, ~~\sigma_n^p: k\mapsto k+p \mod n$.\\\\
We define a gate that has $n$ inputs and $n$ outputs: $U_{\sigma_n^p}$, which maps the $k$-th input to the $\sigma_n^p(k)$-th output.\\
We can notice that $\interp{U_{\sigma_n^p}}\circ\interp{U_{\sigma_n^q}}=\interp{U_{\sigma_n^p\circ\sigma_n^q}}=\interp{U_{\sigma_n^{p+q\mod n}}}$.\\
We can also notice that $\interp{R_Y(\alpha)}^{\otimes n}\circ\interp{U_{\sigma_n^p}} = \interp{U_{\sigma_n^p}}\circ\interp{R_Y(\alpha)}^{\otimes n}$\\
We now consider the following interpretation:
\[
\interp{.}^{\natural}: \left\lbrace 
% [inline block 2: 49 envs, 179407 chars -> data_tex | \begin{array}{l} D_1\circ D_2\mapsto \interp{D_1}^{\natural}\circ \interp{D_2}^{\natural}\\...]
\]
\end{proof}

\begin{proof}[Proof of Proposition \ref{prop:zx2y-decomp}]
\hlabel{prf:zx2y-decomp}
By induction on the diagram:\\
$\bullet$ \textbf{Base Cases:} Showing the result for a green or red dot with only one wire is just a bit of computation. Then, using \zxso, the result can be extended to a green/red dot of any arity. The result is obvious for all other generators.\\
$\bullet$ \textbf{Sequential Composition:} Let two diagrams $D_1$, $D_2$, and four real matrices $A_1$, $B_1$, $A_2$, $B_2$ such that:
\[ \interp{D_1}=A_1+iB_1 \qquad and\qquad \interp{D_2}=A_2+iB_2 \]
We suppose that the result is true for $D_1$ and $D_2$:
\[
\interp{\interp{D_1}^{ZX\to\rec}} = A_1\otimes I_2 + B_1\otimes \iY\qquad and\qquad
\interp{\interp{D_2}^{ZX\to\rec}} = A_2\otimes I_2 + B_2\otimes \iY
\]
On the one hand:
\begin{align*}
\interp{\interp{D_2\circ D_1}^{ZX\to\rec}} &= \interp{\interp{D_2}^{ZX\to\rec}}\circ\interp{\interp{D_1}^{ZX\to\rec}}\\
&= \left(A_2\otimes I_2 + B_2\otimes \iY\right)\circ \left(A_1\otimes I_2 + B_1\otimes \iY\right)\\
&= \left((A_2\circ A_1)-(B_2\circ B_1)\right)\otimes I_2 + \left((A_2\circ B_1)+(B_2\circ A_1)\right)\otimes \iY
\end{align*}
On the other hand:
\begin{align*}
\interp{D_2\circ D_1} &= (A_2+iB_2)\circ(A_1+iB_1) \\
&= (A_2\circ A_1)-(B_2\circ B_1) + i(A_2\circ B_1)+(B_2\circ A_1)
\end{align*}
And thus:
\[ \interp{\interp{D_2\circ D_1}^{ZX\to\rec}} = \Re\left(\interp{D_2\circ D_1}\right)\otimes I_2 + \Im\left(\interp{D_2\circ D_1}\right)\otimes \iY \]
$\bullet$ \textbf{Spatial Composition:} With the same diagrams and matrices (we still assume that the result is true for $D_1$ and $D_2$).\\
On the one hand ($m$ being the number of inputs of $D_2$ and $D_1$ having $n$ inputs and $n'$ outputs):
\begin{align*}
\interp{\interp{D_1\otimes D_2}^{ZX\to\rec}} &= \left( I_2^{\otimes n'}\otimes\interp{\interp{D_2}^{ZX\to\rec}}\right)\circ\interp{\nmswapI{m}{n'}}\\
&\qquad\circ \left(I_2^{\otimes m}\otimes\interp{\interp{D_1}^{ZX\to\rec}}\right)\circ\interp{\nmswapI{n}{m}}\\
&= \left( I_2^{\otimes n'}\otimes A_2 \otimes I_2 + I_2^{\otimes n'}\otimes B_2\otimes \iY\right)\circ\interp{\nmswapI{m}{n'}}\\
&\qquad\circ \left(I_2^{\otimes m}\otimes A_1 \otimes I_2 + I^{\otimes m}\otimes B_1\otimes \iY\right)\circ\interp{\nmswapI{n}{m}}\\
&= \left( I_2^{\otimes n'}\otimes A_2 \otimes I_2 + I_2^{\otimes n'}\otimes B_2\otimes \iY\right)\\
&\qquad\circ\left(A_1\otimes I_2^{\otimes m} \otimes I_2 + B_1\otimes I_2^{\otimes m}\otimes \iY\right)\\
&= ((A_1\otimes A_2)-(B_1\otimes B_2)) \otimes I_2 + ((A_1\otimes B_2)+(B_1\otimes A_2)) \otimes \iY
\end{align*}
On the other hand:
\[ \interp{D_1\otimes D_2} = (A_1\otimes A_2)-(B_1\otimes B_2) + i((A_1\otimes B_2)+(B_1\otimes A_2)) \]
Thus:
\[ \interp{\interp{D_1\otimes D_2}^{ZX\to\rec}} = \Re\left(\interp{D_1\otimes D_2}\right)\otimes I_2 + \Im\left(\interp{D_1\otimes D_2}\right)\otimes \iY \]
\end{proof}

\begin{proof}[Proof of Proposition \ref{prop:inverses}]
\phantomsection\label{prf:inverses}
Let us prove that 
\[\forall D\in\rec,~~ \rec\vdash \interp{\interp{D}^{\rec\to ZX}}^{ZX\to\rec}\circ\left(% [inline block 3: 12 envs, 22196 chars -> data_tex | \begin{tikzpicture} 	\begin{pgfonlayer}{nodelayer}...]

\end{align*}
Showing the result for a spatial composition is a bit of computation, and for the sequential composition, it is obvious. Then, by induction, we prove the result for any diagram.
\end{proof}

\end{document}